\documentclass[sigconf]{acmart}
\usepackage{microtype}
\usepackage{graphicx}
\usepackage{multirow}
\usepackage{multicol}
\usepackage{subcaption}
\usepackage{graphicx}
\usepackage{booktabs} 
\usepackage{bm}
\usepackage{hyperref}

\usepackage{amsmath}
\usepackage{mathtools}
\usepackage{amsthm}

\newtheorem{theorem}{Theorem}[section]
\newtheorem{proposition}[theorem]{Proposition}
\newtheorem{lemma}[theorem]{Lemma}
\newtheorem{corollary}[theorem]{Corollary}
\newtheorem{definition}[theorem]{Definition}
\newtheorem{assumption}[theorem]{Assumption}

\AtBeginDocument{%
  }

\setcopyright{acmlicensed}
\copyrightyear{2018}
\acmYear{2018}
\acmDOI{XXXXXXX.XXXXXXX}
\acmConference[Conference acronym 'XX]{Make sure to enter the correct
  conference title from your rights confirmation email}{June 03--05,
  2018}{Woodstock, NY}
\acmISBN{978-1-4503-XXXX-X/2018/06}

\begin{document}

\title{FlashEvaluator: Expanding Search Space with Parallel Sequence-Level Evaluation}

\author{Chao Feng}
\authornote{These authors contributed equally to this research.}
\correspondingauthor
\affiliation{%
  \institution{Kuaishou Technology}
  \city{Beijing}
  \country{China}
}
\email{fengchao08@kuaishou.com}

\author{Yuanhao Pu}
\authornotemark[1]
\affiliation{%
  \institution{Kuaishou Technology}
  \city{Beijing}
  \country{China}}
\email{puyuanhao@mail.ustc.edu.cn}

\author{Chenghao Zhang}
\authornotemark[1]
\affiliation{%
  \institution{Kuaishou Technology}
  \city{Beijing}
  \country{China}
}
\email{zhangchenghao03@kuaishou.com}

\author{Shanqi Liu}
\affiliation{%
 \institution{Kuaishou Technology}
 \city{Beijing}
 \country{China}}

\author{Shuchang Liu}
\affiliation{%
  \institution{Kuaishou Technology}
  \city{Beijing}
  \country{China}}

\author{Xiang Li}
\affiliation{%
  \institution{Kuaishou Technology}
  \city{Beijing}
  \country{China}}

\author{Chunjie Chen}
\affiliation{%
  \institution{Kuaishou Technology}
  \city{Beijing}
  \country{China}}

\author{Kaiqiao Zhan}
\affiliation{%
  \institution{Kuaishou Technology}
  \city{Beijing}
  \country{China}}


\renewcommand{\shortauthors}{Trovato et al.}

\begin{abstract}
The Generator-Evaluator (G-E) framework generates $K$ candidate sequences and uses an evaluator to select the highest-scoring one, which is widely used in recommender systems (RecSys) and natural language processing (NLP). Existing evaluators commonly score candidates independently. Although such evaluations can be batched, independent scoring neither models interactions among candidates nor eliminates repeated computation of request-level context and recurring candidate elements, causing the total evaluation work to grow approximately linearly with $K$. To handle with, we propose \textbf{FlashEvaluator}, a joint evaluator that scores all candidate sequences in a single forward pass. FlashEvaluator factorizes evaluation into shared request-level encoding, reusable candidate-side computation, sequence assembly by indexing, and cross-sequence interaction
for setwise comparison. We call this request-local reuse scheme \textbf{QKV-Cache}: inspired by autoregressive KV caching, it reuses context-side key/value representations across candidate sequences and, when candidate elements recur, reuses their request-conditioned representations on the query side. In repeated-item settings, the dominant item-encoding cost therefore depends on the number of distinct items rather than their total occurrences across sequences, reducing the marginal cost of evaluating additional candidates. We provide a computational analysis and evaluate FlashEvaluator on recommendation and text summarization. The results show lower latency and higher throughput with competitive recommendation and summarization quality. In an online deployment at Kuaishou with $K=50$, FlashEvaluator reduces inference latency by $44\%$ and increases QPS by $114\%$ relative to the production baseline, while yielding statistically significant gains in retention, engagement, and ecosystem metrics.
\end{abstract}

\begin{CCSXML}
<ccs2012>
<concept>
<concept_id>10002951.10003317.10003347.10003350</concept_id>
<concept_desc>Information systems~Recommender systems</concept_desc>
<concept_significance>500</concept_significance>
</concept>
<concept>
<concept_id>10002951.10003317.10003338.10003343</concept_id>
<concept_desc>Information systems~Learning to rank</concept_desc>
<concept_significance>500</concept_significance>
</concept>
<concept>
<concept_id>10010147.10010178.10010179</concept_id>
<concept_desc>Computing methodologies~Natural language processing</concept_desc>
<concept_significance>100</concept_significance>
</concept>
</ccs2012>
\end{CCSXML}

\ccsdesc[500]{Information systems~Recommender systems}
\ccsdesc[500]{Information systems~Learning to rank}
\ccsdesc[100]{Computing methodologies~Natural language processing}


\received{20 February 2007}
\received[revised]{12 March 2009}
\received[accepted]{5 June 2009}

\maketitle

\section{Introduction}
\label{sec:intro}

The Generator-Evaluator (G-E) paradigm is a common two-stage pattern for candidate generation and selection in recommender systems (RecSys), information retrieval (IR), and natural language processing (NLP). Given a context $x$, a generator produces $K$ candidate sequences, and an evaluator estimates their task-specific utilities and selects the final output. In cascaded RecSys and IR pipelines~\cite{liu2024recflow}, upstream retrieval and ranking stages reduce a large item or document corpus to a candidate pool of $M$ elements. One or more generators~\cite{yang2025comprehensive} then construct $K$ ordered lists from this pool, after which an evaluator selects one list for serving. Variants of this best-of-$K$ pattern also appear in NLP applications, where beam search, stochastic sampling, or other decoding strategies produce multiple hypotheses that are subsequently reranked by a learned evaluator, a reward model, or a task-specific metric, as in machine translation~\cite{fernandes2022quality}, summarization~\cite{romain2018rl4summarization}, retrieval-augmented generation~\cite{kelvin2020ragpretrain}, and reasoning~\cite{luyu2023pal}.

The generator and evaluator play complementary roles in this framework. The generator determines candidate coverage: for a fixed candidate set, the utility of its best output defines the oracle performance attainable by any evaluator selecting from that set. Candidate quality and diversity therefore determine whether a high-utility solution is available~\cite{yang2025comprehensive}. The evaluator, in turn, determines selection quality by estimating the relative utilities of the $K$ candidates and identifying the best one. When evaluator scores are further used as rewards or supervision during generator optimization, evaluator errors may also propagate to the
generator~\cite{zhou2025onerec}.

Most existing evaluators adopt candidate-factorized scoring:
\begin{equation*}
    s_k = f_\theta(x, L_k), \quad k \in [K],
\end{equation*}
where the score of $L_k$ does not explicitly condition on the other candidate sequences. This design is simple and can be efficiently batched, but it cannot directly exploit set-relative evidence such as pairwise dominance, comparative calibration, or contrasts among competing candidates. Such evidence can be particularly useful when multiple candidates are individually plausible but differ only in subtle aspects of task-specific utility. Recent work on setwise reranking~\cite{zhuang2024setwise} suggests that explicitly modeling inter-candidate relationships can improve relative ranking decisions.

Besides, independent evaluation may also introduce redundant computation. Although the $K$ candidates can be evaluated in parallel as a batch, batching alone does not eliminate repeated work. In RecSys and IR, a naive implementation may repeatedly encode the same request-level context and the same items when they occur in multiple candidate lists. In NLP, source- or prompt-side representations may similarly be replicated across candidate-scoring inputs. For a fixed candidate length, the total encoding work of such an implementation grows approximately linearly with $K$. Its wall-clock latency need not grow linearly because it also depends on batch size, available parallelism, memory capacity, and hardware utilization; nevertheless, the duplicated computation can reduce throughput and increase latency under constrained serving resources.

To address these limitations, we propose \textbf{FlashEvaluator}, a joint multi-sequence evaluator that produces scores for all $K$ candidate sequences in a single forward pass. FlashEvaluator decomposes evaluation into four stages: (1) request-level context encoding; (2) sequence-agnostic modeling of distinct candidate elements; (3) candidate-sequence construction through indexing, followed by sequence-specific positional encoding and intra-sequence aggregation; and (4) cross-sequence interaction for setwise comparison. This factorization separates computations that can be shared across candidates from those that must remain sequence-specific.

In the cross-attention module, context-side key and value
representations are computed once per request and reused across all candidate sequences. When the same candidate element occurs in multiple sequences, its request-conditioned query-side representation is also computed once and reused before sequence assembly. We refer to this candidate-side reuse as \textbf{Q-Cache}, and to the combination of context-side K/V reuse and candidate-side Q reuse as \textbf{QKV-Cache}. Unlike the standard autoregressive KV-cache, which reuses historical token states across decoding steps, our cache is request-local and amortizes computation across multiple candidate sequences within the same evaluation request. 

Let $U=\left|\bigcup_{k=1}^{K} L_k\right|$ denote the number of distinct candidate elements appearing across the $K$ sequences, each of length $l$. FlashEvaluator reduces the number of expensive candidate-element encodings from $Kl$ occurrences to $U$
distinct elements, while reducing request-level context encoding from $K$ executions to one. In addition to this computation reuse, the cross-sequence interaction module allows each prediction to condition on the full candidate set and therefore supports direct comparative evaluation.

The main contributions of this work are summarized as follows:
\begin{itemize}
    \item We formulate multi-sequence evaluation as a set-conditioned selection problem and identify two limitations of conventional independent evaluators: the absence of explicit cross-candidate interaction and repeated request- or candidate-side computation in naive per-sequence implementations.

    \item We propose \textbf{FlashEvaluator}, which combines request-level context reuse, sequence-agnostic candidate-element modeling, index-based sequence assembly, and cross-sequence interaction. The resulting architecture outputs all $K$ candidate scores in a single forward pass and supports both shared computation and candidate-relative comparison.

    \item We introduce a request-local \textbf{QKV-Cache} formulation that reuses context-side key/value representations and, when candidate elements recur across sequences, their request-conditioned query-side representations. We provide a computational analysis that separates these reusable encoding costs from indexing, sequence-level modeling, and cross-sequence interaction overhead.

    \item We evaluate FlashEvaluator on recommendation and text summarization benchmarks. The results show competitive recommendation and summarization quality, and substantial inference-efficiency gains. In an online deployment at Kuaishou with $K=50$, FlashEvaluator reduces inference latency by $44\%$ and increases QPS by $114\%$ relative to the production baseline, while producing statistically significant gains in retention, engagement, and ecosystem metrics.
\end{itemize}

\section{Related Work}
\label{sec:relwork}

We review three most relevant lines to FlashEvaluator: generative reranking and multi-list evaluation in RecSys/IR, multi-hypothesis generation and selection in NLP, and computation reuse for efficient multi-candidate inference. We focus on the closest methods in this section and defer broader discussions to Appendix~\ref{appdix:extended_related_work}.

\textbf{Generative reranking and multi-list evaluation.}
Classical reranking methods refine a list produced by upstream retrieval or ranking stages by modeling contextual dependencies among
items and between the request and the list ~\cite{ai2018learning,pang2020setrank,pei2019personalized,DBLP:conf/www/LiZLSCZTXH22,DBLP:conf/sigir/XiLZZDTZZY22,chi2022extr}. Generative reranking instead formulates slate construction as a structured sequence-prediction problem. Autoregressive approaches construct a list sequentially~\cite{bello2018seq2slate,DBLP:journals/corr/abs-2104-00860,DBLP:journals/corr/abs-2005-12206,DBLP:conf/kdd/ZhuLCMSZZX0025,DBLP:conf/kdd/00060HSMZ0G23}, whereas non-autoregressive approaches predict or refine multiple positions jointly to improve generation efficiency~\cite{DBLP:conf/iclr/JiangGQMR19,DBLP:conf/www/Liu0GPZ21,meng2025generative,ren2024nar4rec,DBLP:conf/www/WangWKWCTZXW25,zhang2026dual,lin2024dcdr}. These methods differ in whether they produce a single output list or a collection of candidate lists, and not all of them contain a separate evaluator.

Explicit G-E pipelines use one or more generators to propose candidate lists and then employ a list-level evaluator to estimate their utilities and select the final output~\cite{shi2023pier,DBLP:journals/corr/abs-2102-12057,DBLP:conf/atal/ZhaoLFGZHCPD24,DBLP:conf/www/XuXCLSLWZZ26,yang2025comprehensive}. Setwise ranking methods also compare multiple documents jointly~\cite{zhuang2024setwise}, but they typically rank atomic document candidates rather than complete generated lists. Generative retrieval provides a related but distinct formulation: items are represented by sequences of Semantic IDs, and retrieval is performed by generating valid ID paths~\cite{DBLP:conf/nips/RajputMSKVHHT0S23,zhou2025onerec}. Such methods can be interpreted through a G-E lens when multiple generated paths are explicitly rescored by a separate reward or evaluation model.

\textbf{Multi-hypothesis generation and selection.} In NLP, early discriminative and minimum-error-rate methods optimized decisions over $N$-best hypothesis sets~\cite{collins2002perceptron,och2003mert}, while recent quality-aware decoding methods combine neural generation with reference-free quality-estimation reranking or reference-based selection~\cite{fernandes2022quality}. Minimum Bayes risk (MBR) decoding and consensus-based methods are explicitly set-dependent: they evaluate each candidate according to its expected utility or agreement with other hypotheses~\cite{kumar2004mbr,rosti2007consensus,hildebrand2008combination}. Self-consistency follows a related principle for reasoning by aggregating answers obtained from multiple sampled reasoning paths~\cite{wang2023selfconsistency}. These methods are distinct from two-pass or iterative refinement architectures such as Deliberation Networks~\cite{xia2017deliberation}, iterative refinement~\cite{lee2018refinenet}, and Mask-Predict~\cite{ghazvininejad2019maskpredict}. The latter revise provisional outputs over successive decoding passes, rather than evaluating a fixed set of $K$ complete candidates and selecting one of them. Moreover, although MBR and consensus decoding exploit candidate-set information at the decision-rule level, a common learned-reranker design still computes a separate neural score for each hypothesis. FlashEvaluator instead focuses on shared neural computation and explicit representation-level interaction among complete candidate sequences before producing their scores.

\textbf{Caching and computation reuse.}
Standard autoregressive KV caching stores the key and value states of previously processed tokens and reuses them at subsequent decoding steps. Existing work reduces its memory and serving overhead through multi-query attention (MQA)~\cite{DBLP:journals/corr/abs-1911-02150}, grouped-query attention (GQA)~\cite{DBLP:conf/emnlp/AinslieLJZLS23}, layer-condensed caching~\cite{DBLP:conf/acl/WuT24}, token eviction
~\cite{3666122.3667628,DBLP:conf/iclr/XiaoTCHL24}, quantization~\cite{DBLP:conf/icml/LiuYJZXBC024,DBLP:journals/corr/abs-2407-21118}, and paged memory management~\cite{DBLP:conf/sosp/KwonLZ0ZY0ZS23}. These techniques primarily target temporal reuse and memory management during autoregressive decoding, and are therefore complementary to reuse across multiple candidates within one evaluation request.

The closest reuse mechanisms in RecSys are M-FALCON in HSTU~\cite{DBLP:conf/icml/ZhaiLLWLCGGGHLS24}
and the Context Cache Module in YOLOR~\cite{wang2025yolor}. M-FALCON batches target-aware scoring for candidate items and can reuse the key/value states of the shared user history across candidate microbatches. YOLOR hierarchically extracts multi-scale contexts and reuses them across candidate permutations through a tree-structured cache. Our focus differs in both evaluation granularity and interaction structure: FlashEvaluator jointly evaluates a finite set of candidate sequences, reuses request-side key/value representations and reusable candidate-side representations before sequence assembly, and explicitly models interactions among the resulting sequence representations. We use the term \textbf{QKV-Cache} to denote this request-local reuse across candidate sequences.
\section{Preliminaries}
\label{sec:preliminaries}

\paragraph{Notations.}
We consider a general Generator-Evaluator (G-E) framework for multi-candidate sequence selection. Let $i\in\mathcal I$ denote a basic element (e.g., an item in RecSys, a document in IR, or a generated unit
in NLP), and let $x\in\mathcal X$ denote the query context. An upstream retrieval or generation process provides a candidate set $C(x)=\{i_1,i_2,\dots,i_M\}\subseteq\mathcal I$.
The generator constructs $K$ candidate sequences $L_{1:K}=\{L_1,\dots,L_K\}$ where each candidate sequence $L_k=(i_{k,1},\dots,i_{k,l})$ contains $l$ elements sampled from $C(x)$.

We define a candidate instance as $z=(x,C,L_{1:K})$ which contains the context, candidate pool, and all generated sequences. An evaluator maps the complete candidate instance into a score vector $E_\theta:\mathcal Z\rightarrow\mathbb R^K$ where $E_\theta(z)=(\hat{s}_1,\dots,\hat{s}_K)$ and $\hat{s}_k$ denotes the predicted utility score of candidate sequence $L_k$. The final output of the G--E framework is selected by
\begin{equation}
\hat{y}_\theta(z)=\arg\max_{k\in[K]}\hat{s}_k .
\end{equation}
where $[K]=\{1,2,\dots,K\}$.

\paragraph{Training sample.}
A training example is denoted as $(z,y)$, where $y\in[K]$ indicates an observed supervision label derived from task-specific feedback or a proxy sequence utility. We assume the training samples $S=\{(z^{(t)},y^{(t)})\}_{t=1}^{n}\sim \mathcal D^n$ are independently drawn from an unknown data distribution $\mathcal D$.

\begin{definition}[Ground-truth sequence utility]
There exists an unknown utility function $u^\star:\mathcal X\times\mathcal I^M\times\mathcal I^{l\times K}\times[K]\rightarrow\mathbb R$ such that for each candidate sequence $L_k$ in an instance $z$,
\begin{equation}
u_k^\star(z)=u^\star(z,k).
\end{equation}
The Bayes-optimal candidate is defined as
\begin{equation}
y^\star(z)=\arg\max_{k\in[K]}u_k^\star(z).
\end{equation}
\end{definition}

The objective of the evaluator is to minimize the Top-1 risk:
\begin{equation}
\label{eq:top1-risk}
\mathcal L_{\mathrm{Top-1}}(\theta)=\mathbb P_{z\sim\mathcal D}\left(\hat y_\theta(z)\neq y^\star(z)\right).
\end{equation}

Since directly optimizing it is generally intractable, we generally train the evaluator using a differentiable surrogate loss $\ell$. For the predicted score vector $\hat{\bm s}=E_\theta(z)$
and label $y$, the population surrogate risk is defined as
\begin{equation}
\label{eq:surrogate-risk}
\mathcal R(\theta)
=
\mathbb E_{(z,y)\sim\mathcal D}
\left[
\ell(\hat{\bm s},y)
\right].
\end{equation}
whose corresponding empirical risk on training sample $S$ is
\begin{equation}
\label{eq:empirical-risk}
\widehat{\mathcal R}_S(\theta)
=
\frac1n
\sum_{t=1}^{n}
\ell
\left(
E_\theta(z^{(t)}),
y^{(t)}
\right).
\end{equation}

\section{FlashEvaluator}
\label{sec:method}

\begin{figure*}[t]
    \centering
    \begin{subfigure}[t]{0.49\textwidth}
        \centering
        \includegraphics[width=\textwidth]{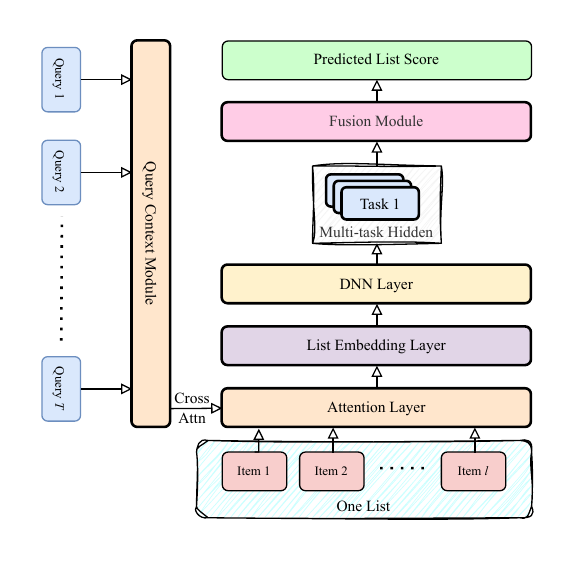}
        \caption{\textbf{Candidate-factorized evaluator.}
        Candidate sequences are scored independently, resulting in repeated
        context- and candidate-side computation.}
        \label{fig:baseline}
    \end{subfigure}
    \hfill
    \begin{subfigure}[t]{0.49\textwidth}
        \centering
        \includegraphics[width=\textwidth]{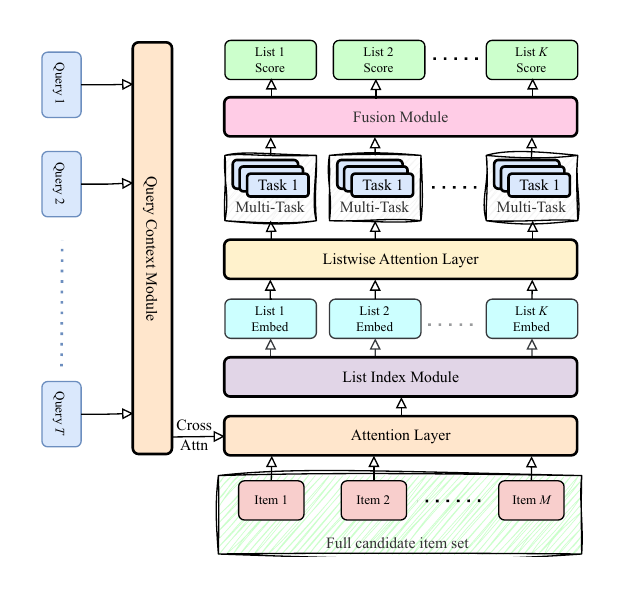}
        \caption{\textbf{FlashEvaluator.}
        Shared representations are computed once, candidate sequences are
        assembled by indexing, and their representations interact before scoring.}
        \label{fig:ours}
    \end{subfigure}
    \caption{Comparison between a conventional candidate-factorized evaluator
    and the proposed joint evaluator.}
    \label{fig:architecture}
\end{figure*}

Given a candidate instance $z$, conventional evaluators commonly adopt candidate-factorized scoring,
\[
E_{\theta}^{\mathrm{ind}}(z)=\bigl(f_{\theta}(x,L_1),\ldots,f_{\theta}(x,L_K)\bigr).
\]
As discussed above, this formulation may repeatedly construct representations of the same request context and recurring candidate elements. Moreover, the predicted score cannot directly condition on the other competing candidate sequences. FlashEvaluator instead implements the joint mapping $E_{\theta}(z)\in\mathbb{R}^{K}$. As illustrated in Figure~\ref{fig:ours}, it factorizes evaluation into four stages: shared request- and candidate-side encoding, index-based candidate-sequence assembly, intra-sequence aggregation, and cross-sequence interaction.

\subsection{Architecture Overview}
\label{sec:architecture}

Let $\mathcal{U}(z)=\bigcup_{k=1}^{K}\{i_{k,1},\ldots,i_{k,l}\}, U=|\mathcal{U}(z)|\leq M$ denote the set of distinct elements appearing in the generated candidate sequences, while we can also naively encode the entire upstream candidate pool by setting $\mathcal{U}(z)=C(x)$ and $U=M$. Operating on $\mathcal{U}(z)$ exposes the maximum amount of candidate-side reuse, since each distinct element is encoded only once regardless of how often it occurs in $L_{1:K}$. FlashEvaluator then computes the following modules:
\begin{equation*}
\bm{H}=\operatorname{Enc}_\theta\bigl(x,\mathcal{U}(z)\bigr)\rightarrow\bm{c}_{1:K}=\operatorname{Idx}_\theta\bigl(\bm{H},L_{1:K}\bigr)\rightarrow\widehat{\bm{s}}=\operatorname{Inter}_\theta\bigl(\bm{c}_{1:K}\bigr),
\end{equation*}
where $\bm{H}$ contains reusable representations of the candidate elements, $\bm{c}_{1:K}=(\bm{c}_1,\ldots,\bm{c}_K)$ are the sequence-level representations, and $\widehat{\bm{s}}=(\hat{s}_1,\ldots,\hat{s}_K)$ is the final score vector. $\operatorname{Enc}_{\theta}$ performs shared request- and candidate-side encoding, $\operatorname{Idx}_{\theta}$ performs index-based sequence assembly and intra-sequence aggregation, and $\operatorname{Inter}_{\theta}$ models cross-sequence interactions before producing the $K$ scores.

\subsection{Shared Request and Candidate Encoding}
\label{sec:shared-encoding}

Each element $i\in\mathcal{U}(z)$ is mapped to a dense representation $\bm{h}_i\in\mathbb{R}^{d}$. Then, apply self-attention over all distinct candidate elements:
\[
\bm{H}^{(1)}=\operatorname{SelfAttn}\bigl(\{\bm{h}_{i}:i\in\mathcal{U}(z)\}\bigr),
\]
which captures dependencies among elements before they are assembled into individual candidate sequences. The query context is encoded by the Query Context Module as $\bm{Q}=\{\bm{q}_1,\bm{q}_2,\ldots,\bm{q}_T\}$. We inject the request-level context into the candidate representations through cross-attention:
\[
\bm{H}^{(2)}=\operatorname{CrossAttn}\bigl(\bm{H}^{(1)},\bm{Q}\bigr).
\]
The resulting representation of each candidate element is conditioned on the shared query context and can be reused in every sequence where that element appears.

This computation naturally supports request-local representation reuse. The context-side key and value representations derived from $\bm{Q}$ are computed only once for the candidate instance, rather
than separately for each candidate sequence. We refer to the reuse of candidate-side query representations as \emph{Q-Cache}, and to its combination with context-side key/value reuse as \emph{QKV-Cache}. Unlike autoregressive KV caching, which reuses historical states across decoding steps, QKV-Cache amortizes computation across multiple complete candidate sequences within the same evaluation request.

\subsection{Index-Based Sequence Assembly}
\label{sec:index-assembly}

For each candidate sequence $L_k=(i_{k,1},\ldots,i_{k,l})$, we construct its encoded representation $\bm{H}^{(2)}_k$ by indexing the corresponding element representations from $\bm{H}^{(2)}$:
\[\bm{H}^{(2)}_k=\bigl(\bm{h}^{(2)}_{i_{k,1}},\ldots,\bm{h}^{(2)}_{i_{k,l}}\bigr).\]
Sequence-position embeddings are added after indexing. Therefore, the same shared element representation can be reused while still receiving different positional information when the element appears at different positions. We prepend a \texttt{[CLS]} token to each indexed sequence and apply a shared intra-sequence encoder $\bm{c}_k=\operatorname{SeqEnc}_\theta\bigl([\texttt{CLS};\bm{H}^{(2)}_k]\bigr)$, where $\bm{c}_k\in\mathbb{R}^{d}$ denotes the resulting sequence-level representation. The $K$ candidate sequences share the same encoder and can be processed in parallel.

\subsection{Cross-Sequence Interaction}
\label{sec:cross-sequence}

After obtaining the sequence representations $\bm{c}_{1:K}$, FlashEvaluator jointly models their interactions:
\[
\bm{C}^{\prime}=\operatorname{SelfAttn}\bigl(\{\bm{c}_1,\bm{c}_2,\ldots,\bm{c}_K\}\bigr).
\]
This cross-sequence interaction allows the representation of each candidate sequence to condition on the other competing sequences. Consequently, the evaluator can capture comparative information such as relative quality among the generated candidates. Finally, a shared prediction head maps each updated sequence representation to a scalar score:
\[
\hat{s}_k=f_{\theta}\bigl(\bm{C}^{\prime}_k\bigr),\quad E_{\theta}(z)=(\hat{s}_1,\ldots,\hat{s}_K).
\]
For multi-objective applications, the shared representation $\bm{C}^{\prime}_k$ can be connected to multiple task-specific prediction heads followed by a fusion module. This extension does not change the shared encoding, sequence assembly, or cross-sequence interaction mechanisms. 

\subsection{Learning Objectives}
\label{sec:learning-objectives}

FlashEvaluator can be trained with either pointwise or listwise objective. For pointwise objectives, we can use the mean-squared error when a scalar target $s_k^{\star}$ is available for each candidate sequence, and binary cross-entropy for binary or probabilistic targets $s_k^{\star}\in[0,1]$:
\[
\mathcal{L}_{\mathrm{MSE}}=\frac{1}{K}\sum_{k=1}^{K}\bigl(\hat{s}_k-s_k^{\star}\bigr)^2.
\]
\[
\mathcal{L}_{\mathrm{BCE}}=-\frac{1}{K}\sum_{k=1}^{K}\left[s_k^{\star}\log\sigma(\hat{s}_k)+(1-s_k^{\star})\log\bigl(1-sigma(\hat{s}_k)\bigr)\right],
\]
where $\sigma(\cdot)$ is the sigmoid function. For listwise objective, given the optimal candidate label $y\in[K]$ defined in Section~\ref{sec:preliminaries}, the evaluator can also be trained with the listwise softmax cross-entropy
\[
\mathcal{L}_{\mathrm{CE}}=-\log\frac{\exp(\hat{s}_{y})}{\sum_{k=1}^{K}\exp(\hat{s}_{k})},
\]
which directly optimizes the relative scores of all candidate sequences within the same candidate instance. The joint architecture is naturally suited to listwise training because all $K$ scores are produced in the same forward computation graph. For the recommendation experiments, the listwise label $y$ is constructed from the proxy utility $r_k$, while the pointwise variant uses item-level binary supervision, as detailed in
Appendix~\ref{appdix:datasets}.

\section{Theoretical properties}
\label{sec:theory}
We analyze the statistical properties of the two evaluator architectures introduced above. The conventional evaluator constructs its score vector by independently applying a shared scalar scoring function to each candidate sequence, whereas FlashEvaluator directly produces a jointly conditioned score vector. Notice that the evaluator architecture and the learning objective are conceptually orthogonal: either architecture can, in principle, be trained with pointwise or listwise supervision. To isolate the effect of score construction, our generalization analysis evaluates both architectures using the same listwise surrogate defined in Section~\ref{sec:preliminaries}. Appendix~\ref{appdix:robustness} provides separate objective-level characterizations of sequence-level squared regression and listwise softmax training under a decomposable additive feedback shift. Appendix~\ref{appdix:ablation} empirically compares the item-level BCE and listwise objectives
used in the RecFlow experiments.

\subsection{Generalization Bounds}
\label{sec:generalization}
We focus on the population surrogate risk $\mathcal{R}(\theta)$ and its empirical counterpart
$\widehat{\mathcal{R}}_S(\theta)$ defined in Eq.~(\ref{eq:surrogate-risk})-(\ref{eq:empirical-risk}). For independent evaluator, the corresponding risks $\mathcal{R}^{\mathrm{ind}}(\theta)$ and $\widehat{\mathcal{R}}^{\mathrm{ind}}_S(\theta)$ are defined by replacing $E_\theta(z)$ with $E_\theta^{\mathrm{ind}}(z)$.

Recall that
\[
E_\theta^{\mathrm{ind}}(z)=\bigl(f_\theta(x,L_1),\ldots,f_\theta(x,L_K)\bigr),\quad E_\theta(z)=(\hat{s}_1,\ldots,\hat{s}_K).
\]
We consider the induced function classes
\begin{align}
\mathcal{H}_{\mathrm{joint}}&=\bigl\{z\mapsto E_\theta(z):\theta\in\Theta_{\mathrm{joint}}\bigr\},\\
\mathcal{H}_{\mathrm{ind}}&=\bigl\{z\mapsto E_\theta^{\mathrm{ind}}(z):\theta\in\Theta_{\mathrm{ind}}\bigr\},
\end{align}
and define the scalar base class of the independent evaluator as
\begin{equation}
\mathcal{H}_{\mathrm{base}}=\bigl\{(x,L)\mapsto f_\theta(x,L):\theta\in\Theta_{\mathrm{ind}}\bigr\}.
\end{equation}

\begin{assumption}[Loss regularity]
\label{ass:loss}
The surrogate loss used in the analysis is normalized or clipped such that, for every $y\in[K]$, $0\leq \ell(\bm{s},y)\leq 1$, and there exists a constant $L_\ell>0$, independent of $K$, such that
\[
\bigl|\ell(\bm{s},y)-\ell(\bm{s}',y)\bigr|\leq L_\ell\|\bm{s}-\bm{s}'\|_2
\]
for all $\bm{s},\bm{s}'\in\mathbb{R}^K$.
\end{assumption}
\begin{assumption}[Capacity scaling]
\label{ass:capacity}
There exist constants $C_{\mathrm{base}}$, $C_{\mathrm{hid}}$,
and $\Lambda$, independent of the sample size $m$ and the number
of candidate sequences $K$, such that, for every finite collection
$S_m$ of $m$ valid inputs from the respective input spaces,
\[
\widehat{\mathfrak R}_{S_m}(\mathcal H_{\mathrm{base}})
\le \frac{C_{\mathrm{base}}}{\sqrt m},
\quad
\widehat{\mathfrak R}_{S_m}(\mathcal H_{\mathrm{joint}})
\le \frac{\Lambda C_{\mathrm{hid}}}{\sqrt m}.
\]
\end{assumption}

The first bound controls the scalar function repeatedly applied by the independent evaluator. The second places a global capacity constraint on the jointly produced score vector. Such conditions are consistent with standard norm-based neural-network bounds\citep{neyshabur2015norm,golowich2018size}, while a sufficient condition further explains it with Lemma~\ref{appdix:lem:joint-capacity}.

We then compare the two evaluator classes under generalization analysis. Detailed proofs are provided in Appendix~\ref{appdix:generalization}.

\begin{theorem}[Generalization bounds]
\label{thm:comparison}
Suppose Assumptions~\ref{ass:loss} and~\ref{ass:capacity} hold.
For any $\delta\in(0,1)$, with probability at least $1-\delta$, the
following bounds hold uniformly over the respective parameter spaces:
\begin{align}
\mathcal{R}(\theta)-\widehat{\mathcal{R}}_S(\theta)&\leq\frac{2\sqrt{2}L_\ell\Lambda C_{\mathrm{hid}}}{\sqrt{n}}+3\sqrt{\frac{\log(4/\delta)}{2n}},\\
\mathcal{R}^{\mathrm{ind}}(\theta)-\widehat{\mathcal{R}}^{\mathrm{ind}}_S(\theta)&\leq2\sqrt{2}L_\ell C_{\mathrm{base}}\sqrt{\frac{K}{n}}+3\sqrt{\frac{\log(4/\delta)}{2n}}.
\end{align}
\end{theorem}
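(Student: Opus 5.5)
Both bounds follow the standard Rademacher route: a uniform deviation bound for the loss class, then a vector-contraction step that moves from the loss class to the score class, and finally Assumption~\ref{ass:capacity} to control the score class. The constants $2\sqrt{2}$ and $3\sqrt{\log(4/\delta)/(2n)}$ point to a specific recipe. The factor $2\sqrt{2}$ combines the symmetrization factor $2$ with the $\sqrt{2}$ from Maurer's vector-contraction inequality. The term $3\sqrt{\cdot}$ with $4/\delta$ corresponds to the empirical-Rademacher version of the uniform bound (McDiarmid applied twice, once to the sup deviation and once to the empirical Rademacher complexity). The $4/\delta$ instead of $2/\delta$ comes from a union bound over the two evaluator classes, each given budget $\delta/2$, so that both statements hold simultaneously.

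\textbf{Step 1.} Fix $\mathcal{G}=\{(z,y)\mapsto \ell(h(z),y): h\in\mathcal H\}$. Since $\ell\in[0,1]$ by Assumption~\ref{ass:loss}, the standard theorem gives, with probability at least $1-\delta'$, uniformly over $h$,
\[
\mathcal R-\widehat{\mathcal R}_S\le 2\widehat{\mathfrak R}_S(\mathcal G)+3\sqrt{\log(2/\delta')/(2n)}.
\]
Setting $\delta'=\delta/2$ for each class yields the stated confidence term.

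\textbf{Step 2.} Apply Maurer's vector-contraction inequality. For each fixed $y_t$, the map $\bm s\mapsto \ell(\bm s,y_t)$ is $L_\ell$-Lipschitz in $\|\cdot\|_2$ on $\mathbb R^K$, so
\[
\widehat{\mathfrak R}_S(\mathcal G)\le \sqrt2 L_\ell\,\mathbb E_\epsilon\sup_h\frac1n\sum_{t,k}\epsilon_{tk}h_k(z^{(t)}).
\]
The right-hand side is a Rademacher complexity over $nK$ coordinates.

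\textbf{Step 3a (joint class).} Here I interpret the second inequality of Assumption~\ref{ass:capacity} as bounding exactly this vector-valued complexity, normalized by $n$. This means reading $\widehat{\mathfrak R}_{S_m}(\mathcal H_{\mathrm{joint}})$ as the Maurer-type quantity $\mathbb E\sup\frac1m\sum_{t,k}\epsilon_{tk}h_k(z_t)$ with $m=n$ instances. Under that reading the bound $\Lambda C_{\mathrm{hid}}/\sqrt n$ plugs in directly and produces the first inequality.

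\textbf{Step 3b (independent class).} Here $h_k(z)=f_\theta(x,L_k)$, so the sum $\sum_{t,k}\epsilon_{tk}f_\theta(x^{(t)},L^{(t)}_k)$ is a Rademacher sum of a single base class over the $m=nK$ points $(x^{(t)},L_k^{(t)})$. Assumption~\ref{ass:capacity} holds for any finite collection of valid inputs, with repeats allowed. It gives $\mathbb E\sup\sum \le nK\cdot C_{\mathrm{base}}/\sqrt{nK}=C_{\mathrm{base}}\sqrt{nK}$. Dividing by $n$ gives $C_{\mathrm{base}}\sqrt{K/n}$, and multiplying by $2\sqrt2L_\ell$ gives the second inequality.

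The main obstacle is the normalization in Step 3a. The assumption as written does not specify how $\widehat{\mathfrak R}$ is defined for vector-valued classes. I would state explicitly that it is the sum-over-coordinates version normalized by the number of instances; this is what makes the bound $K$-free, and Lemma~\ref{appdix:lem:joint-capacity} is presumably meant to justify it. For the independent class, the corresponding care point is that the pooled points $(x^{(t)},L_k^{(t)})$ are not i.i.d. This is harmless, because the empirical Rademacher bound is deterministic given the points and only the i.i.d.\ structure across $t$ is used in Step 1.
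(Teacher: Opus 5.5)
Your proposal is correct and follows the paper's proof essentially step for step. Both arguments use a uniform deviation bound at confidence $\delta/2$ for each class plus a union bound, and Maurer's vector contraction for the factor $\sqrt{2}L_\ell$. Both read the capacity assumption as the coordinate-summed vector Rademacher complexity normalized by $n$, which is exactly how the appendix defines $\widehat{\mathfrak{R}}_S$. Both obtain the independent bound from the flattening identity $\widehat{\mathfrak{R}}_S(\mathcal{H}_{\mathrm{ind}})=K\,\widehat{\mathfrak{R}}_{\widetilde{S}}(\mathcal{H}_{\mathrm{base}})$ with $m=nK$. The only cosmetic difference is in the deviation step: you keep the empirical complexity, where the $3\sqrt{\log(2/\delta')/(2n)}$ term arises naturally, while the paper states its deviation lemma with the expected complexity $\mathfrak{R}_n$ and takes expectations of the uniform empirical capacity bounds; both give the stated constants.
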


Theorem~\ref{thm:comparison} shows that the leading capacity term of the joint evaluator is $\mathcal{O}(n^{-1/2})$, whereas the bound obtained by concatenating $K$ independent scores is $\mathcal{O}(\sqrt{K/n})$. In particular, if $\frac{C_{\mathrm{base}}}{\Lambda C_{\mathrm{hid}}}=\Theta(1)$, the ratio between the two leading terms scales as $\Theta(\sqrt{K})$. 

\textbf{Robustness.}
Appendix~\ref{appdix:robustness} separately characterizes the sensitivity of pointwise
squared regression and listwise softmax training under a decomposable additive feedback shift. Pointwise regression is affected by both the common candidate-set-level shift and candidate-specific distortions. In contrast, listwise softmax is exactly invariant to the common additive shift, while its local sensitivity depends only on relative candidate-specific distortions. Since these excess risks are defined under different surrogate losses, we do not compare their absolute magnitudes.
  
\subsection{Computational Complexity and Scalability}
\label{sec:complexity}

We further compare the forward-pass costs of the two
evaluator architectures. The independent evaluator repeatedly
constructs representations for all $Kl$ element occurrences, whereas
FlashEvaluator constructs a request-conditioned representation only
once for each of the $U$ distinct elements and reuses it through
indexing.
\begin{definition}[Reuse factor]
\label{def:reuse}
For a candidate instance $z$, we define the element reuse factor as
\[
\rho(z)
=
\frac{Kl}{U}.
\]
Since $U\leq Kl$, we have $\rho(z)\geq 1$. When the entire upstream candidate pool is encoded instead of $\mathcal{U}(z)$, $U$ is replaced by $M$.
\end{definition}

The following proposition characterizes the computational benefit of
reuse. Its detailed proof is provided in
Appendix~\ref{appdix:complexity}.

\begin{proposition}[Computational Advantage via Reuse]
\label{prop:complexity-advantage}
When candidate-set encoding dominates the remaining lightweight
operations, the computational-cost ratio satisfies
\[
\frac{T_{\mathrm{joint}}}{T_{\mathrm{ind}}}
\approx
\frac{U}{Kl}
+
\mathcal{O}(\epsilon)
=
\frac{1}{\rho}
+
\mathcal{O}(\epsilon),
\]
where $\epsilon$ summarizes the normalized overhead of index-based
sequence assembly, intra-sequence aggregation, score prediction, and
cross-sequence interaction.
\end{proposition}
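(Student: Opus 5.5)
The plan is to write both forward-pass costs as explicit sums of per-stage terms, isolate the candidate-set encoding term that the hypothesis says dominates, and absorb everything else into a normalized remainder $\epsilon$.

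\textbf{Cost model.} Let $c_{\mathrm{enc}}$ be the cost of producing one request-conditioned element representation, i.e.\ the per-element share of the self-attention and cross-attention stages of Section~\ref{sec:shared-encoding}. Let $c_{\mathrm{ctx}}$ be the cost of encoding the query context once and forming its key/value projections.

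\textbf{Independent evaluator.} A naive implementation of $E_\theta^{\mathrm{ind}}$ encodes the context $K$ times and the element occurrences $Kl$ times. It also pays a sequence-level cost $K c_{\mathrm{seq}}$ for aggregation and the scoring head. So
\[
T_{\mathrm{ind}} = Kl\,c_{\mathrm{enc}} + K c_{\mathrm{ctx}} + K c_{\mathrm{seq}}.
\]

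\textbf{FlashEvaluator.} It encodes the context once and each of the $U$ distinct elements once, then pays the remaining stages:
\[
T_{\mathrm{joint}} = U c_{\mathrm{enc}} + c_{\mathrm{ctx}} + T_{\mathrm{idx}} + K c_{\mathrm{seq}} + T_{\mathrm{inter}}.
\]
Here $T_{\mathrm{idx}}=\mathcal{O}(Kld)$ is the gather plus positional embeddings, and $T_{\mathrm{inter}}=\mathcal{O}(K^2 d + Kd^2)$ is the cross-sequence self-attention. If the within-$\mathcal{U}(z)$ self-attention scales superlinearly in $U$, I fold the extra $\mathcal{O}(U^2 d)$ into an effective per-element cost. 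Equivalently, I state the proposition under the modeling convention that element encoding cost is linear in the number of encoded elements.

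\textbf{Dividing through.} Factor $Kl\,c_{\mathrm{enc}}$ out of both costs:
\[
\frac{T_{\mathrm{joint}}}{T_{\mathrm{ind}}}
= \frac{U/(Kl) + \epsilon_{\mathrm{joint}}}{1 + \epsilon_{\mathrm{ind}}},
\]
with
\[
\epsilon_{\mathrm{joint}} = \frac{c_{\mathrm{ctx}} + T_{\mathrm{idx}} + K c_{\mathrm{seq}} + T_{\mathrm{inter}}}{Kl\,c_{\mathrm{enc}}},
\qquad
\epsilon_{\mathrm{ind}} = \frac{K c_{\mathrm{ctx}} + K c_{\mathrm{seq}}}{Kl\,c_{\mathrm{enc}}}.
\]
The dominance hypothesis is exactly $\epsilon := \max(\epsilon_{\mathrm{joint}}, \epsilon_{\mathrm{ind}}) \ll 1$. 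Then $1/(1+\epsilon_{\mathrm{ind}}) = 1 - \mathcal{O}(\epsilon)$. Since $U/(Kl) \le 1$, the ratio equals $U/(Kl) + \mathcal{O}(\epsilon)$. By Definition~\ref{def:reuse} this is $1/\rho + \mathcal{O}(\epsilon)$. Replacing $U$ by $M$ covers the full-pool variant.

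\textbf{Main obstacle.} The delicate step is making ``dominates'' precise so that $\epsilon$ genuinely collects every non-encoding term, including the quadratic $K^2 d$ interaction cost and the context encoding. Two points need care. First, the $\mathcal{O}(\epsilon)$ bound must be uniform, which holds because $U/(Kl)\in(0,1]$ keeps the expansion error bounded by a constant times $\epsilon$. Second, context reuse only helps and is absorbed into $\epsilon$. I would state the assumption explicitly as $\epsilon \le 1/2$ to make the Taylor step rigorous. I would remark that the statement is an approximation in this regime rather than an exact identity.
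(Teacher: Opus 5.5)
Your proposal is correct and follows essentially the same argument as the paper's proof. Both write $T_{\mathrm{ind}}$ and $T_{\mathrm{joint}}$ as per-stage cost sums with $Kl$ versus $U$ element encodings, normalize by the dominant term $Kl\,c_{\mathrm{enc}}$, and absorb indexing, aggregation, scoring, and cross-sequence attention into $\epsilon$. The differences are minor: the paper simply omits the request-context term, noting that it only favors FlashEvaluator, whereas folding it into $\epsilon_{\mathrm{ind}}$ tacitly adds the assumption $c_{\mathrm{ctx}}\ll l\,c_{\mathrm{enc}}$; and your $\epsilon\le 1/2$ restriction is unnecessary, since $\bigl|(a+\epsilon_{\mathrm{joint}})/(1+\epsilon_{\mathrm{ind}})-a\bigr|\le\epsilon_{\mathrm{joint}}+\epsilon_{\mathrm{ind}}$ holds for every $a\in(0,1]$.
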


Proposition~\ref{prop:complexity-advantage} shows that the dominant
representation-construction cost of
$E_\theta^{\mathrm{ind}}(\cdot)$ scales with the total number of
element occurrences, namely $Kl$, whereas that of
$E_\theta(\cdot)$ scales with the number of distinct elements $U$.
The resulting advantage therefore increases as the same elements are
reused across more candidate sequences.

When the number of candidate sequences increases from $K$ to $K'$,
the dominant encoding cost of the independent evaluator increases
proportionally to
\[
\Delta T_{\mathrm{ind}}
\propto
(K'-K)l.
\]
For FlashEvaluator, the encoding cost depends only on
whether the newly added sequences introduce previously unseen
elements. If $U$ remains unchanged, its marginal candidate-element
encoding cost is zero. The total additional cost is not zero but not significant, since the candidate-set interaction and sequence-dependent
operations incur only a small overhead relative to the dominant embedding-compression cost.

Therefore, FlashEvaluator's main computational advantage lies in
amortizing the dominant request- and candidate-side representation
construction across candidate sequences, allowing more candidates to
be evaluated under the same serving budget when this reusable encoding cost dominates.
\begin{table*}[t]
    \centering
    \caption{Statistics of the datasets used in the experiments.}
    \label{tab:dataset_stats}
    \begin{tabular}{lccccc}
        \toprule
        Dataset & Domain & \# Requests & \# Items & Candidate Pool Size & Target Length \\
        \midrule
        ML-1M & RecSys & 161,646 & 3,043 & 50 & 6 \\
        Amazon-Books & RecSys & 309,917 & 38,121 & 50 & 6 \\
        RecFlow & RecSys & 3,308,233 & 14,181,768 & 120 & 6 \\
        CNN/DM & NLP & 30,000 & -- & -- & 1 sentence \\
        \bottomrule
    \end{tabular}
\end{table*}
\begin{table*}[t]
\centering
\caption{Performance comparison on three recommendation datasets.
N@6, P@6, R@6, and F1@6 denote NDCG@6, Precision@6,
Recall@6, and F1@6, respectively. The best and second-best
results are highlighted in \textbf{bold} and \underline{underlined},
respectively.}
\label{tab:main_results}
\setlength{\tabcolsep}{3.2pt}
\begin{tabular}{ll*{3}{cccc}}
\toprule
\multirow{2}{*}{Category}
& \multirow{2}{*}{Model}
& \multicolumn{4}{c}{ML-1M}
& \multicolumn{4}{c}{Amazon-Books}
& \multicolumn{4}{c}{RecFlow} \\
\cmidrule(lr){3-6}
\cmidrule(lr){7-10}
\cmidrule(lr){11-14}
& & N@6 & P@6 & R@6 & F1@6
  & N@6 & P@6 & R@6 & F1@6
  & N@6 & P@6 & R@6 & F1@6 \\
\midrule

\multirow{6}{*}{\shortstack{G-Only}}
& DNN
& 0.5950 & 0.4539 & 0.5542 & 0.4876
& 0.6448 & 0.5072 & 0.6125 & 0.5472
& 0.1584 & 0.0793 & 0.2069 & 0.1084 \\

& DCN
& 0.5981 & 0.4561 & 0.5573 & 0.4901
& 0.6683 & 0.5298 & 0.6461 & 0.5701
& 0.1597 & 0.0795 & 0.2083 & 0.1088 \\

& Seq2Slate
& 0.6222 & 0.4867 & 0.5927 & 0.5225
& 0.6952 & 0.5654 & 0.6871 & 0.6078
& 0.1693 & 0.0821 & 0.2134 & 0.1130 \\

& DLCM
& 0.6061 & 0.4643 & 0.5667 & 0.4988
& 0.6597 & 0.5242 & 0.6396 & 0.5641
& 0.1747 & 0.0861 & 0.2240 & 0.1169 \\

& SetRank
& 0.7154 & 0.5720 & 0.6933 & 0.6132
& 0.8014 & 0.6635 & 0.8145 & 0.7156
& 0.1823 & 0.0896 & 0.2344 & 0.1225 \\

& PRM
& 0.7081 & 0.5639 & 0.6843 & 0.6049
& 0.7992 & 0.6603 & 0.8107 & 0.7122
& 0.1840 & 0.0905 & 0.2368 & 0.1238 \\
\midrule
\multirow{1}{*}{\shortstack{E-Only}}
& YOLOR
& 0.7203 & 0.5769 & 0.6994 & 0.6186
& 0.8088 & 0.6708 & 0.8240 & 0.7235
& 0.1895 & 0.0924 & 0.2416 & 0.1265 \\

\midrule
\multirow{4}{*}{\shortstack{G-E}}
& PIER
& 0.7146 & 0.5715 & 0.6929 & 0.6128
& 0.7987 & 0.6613 & 0.8118 & 0.7130
& \underline{0.1910} & 0.0935 & 0.2431 & 0.1277 \\

& PIER+Flash
& 0.7228 & 0.5793 & 0.7023 & 0.6212
& 0.8079 & 0.6700 & 0.8233 & 0.7228
& \textbf{0.1925} & \textbf{0.0938} & \textbf{0.2446} & \textbf{0.1284} \\

& NAR4Rec
& \underline{0.7348} & 0.5912 & 0.7162 & 0.6338
& \underline{0.8188} & 0.6807 & 0.8365 & 0.7341
& 0.1792 & 0.0880 & 0.2297 & 0.1203 \\

& NAR4Rec+Flash
& \textbf{0.7426} & \textbf{0.5987} & \textbf{0.7250} & \textbf{0.6419}
& \textbf{0.8274} & \textbf{0.6891} & \textbf{0.8471} & \textbf{0.7431}
& 0.1818 & 0.0891 & 0.2330 & 0.1218 \\

\bottomrule
\end{tabular}
\end{table*}

\begin{table*}[ht]
\centering
\caption{Performance comparison of different reranking methods on CNN/DailyMail using three generator backbones. The $K=64$ expands the candidate set under a lower latency to SimCLS $K=16$. The best and second-best results excluding Oracle are highlighted in \textbf{bold} and \underline{underlined}, respectively.}
\label{tab:summarization_rerank}
\begin{tabular}{lcccccccccccc}
\toprule
\multirow{2}{*}{\textbf{Method}} & \multicolumn{4}{c}{\textbf{T5-base}} & \multicolumn{4}{c}{\textbf{BART-Large}} & \multicolumn{4}{c}{\textbf{Llama-3.1-8B (3-shot)}}\\
\cmidrule(lr){2-5}\cmidrule(lr){6-9}\cmidrule(lr){10-13}
& R-1 & R-2 & R-L & BS & R-1 & R-2 & R-L & BS & R-1 & R-2 & R-L & BS \\
\midrule
\textit{Bounds} & & & & & & & & & & & & \\
Base (Beam 1) & 40.81 & 18.37 & 37.81 & 87.23 & 43.63 & 20.66 & 40.47 & 87.88 & 38.78 & 13.44 & 35.08 & 87.53 \\
Oracle ($K=16$) & 48.21 & 24.95 & 45.29 & 88.32 & 52.43 & 28.73 & 49.42 & 89.30 & 46.35 & 19.72 & 42.75 & 88.49\\
\midrule
\textit{Rerankers} & & & & & & & & & & & & \\
RankGPT (Qwen2.5-32B)
& 41.09 & 18.35 & 37.99 & 87.32
& \underline{44.54} & 21.17 & 41.33 & 88.04
& 39.00 & 13.82 & 35.33 & 87.51 \\
SimCLS ($K=16$)
& \underline{42.12} & \underline{19.33}
& \underline{39.14} & \underline{87.51}
& 44.46 & 21.12 & 41.29 & 88.03
& \underline{41.29} & \underline{15.53}
& \underline{37.48} & 87.60 \\
FlashEvaluator ($K=16$)
& 41.28 & 18.71 & 38.35 & 87.42
& 44.52 & \underline{21.35}
& \underline{41.48} & \underline{88.20}
& 40.46 & 14.65 & 36.70 & \underline{87.77} \\
FlashEvaluator ($K=64$) & \textbf{42.36} & \textbf{19.54} & \textbf{39.35} & \textbf{87.60} & \textbf{44.86} & \textbf{21.67} & \textbf{41.81} & \textbf{88.29} & \textbf{41.47} & \textbf{15.70} & \textbf{37.66} & \textbf{87.89} \\
\bottomrule
\end{tabular}
\end{table*}

\section{Experiments}
\label{sec:experiments}

We evaluate \textbf{FlashEvaluator} through both offline experiments and online deployment. The offline evaluation covers three recommendation datasets and one text-summarization benchmark, including public user-item interactions, industrial multi-stage recommendation logs, and natural language generation. We further deploy \textbf{FlashEvaluator} in Kuaishou's production short-video recommendation system and conduct an online A/B test under real-world latency constraints and multi-objective serving targets. Specifically, we aim to answer the following research questions:

\begin{itemize}
    \item \textbf{RQ1:} How does FlashEvaluator compare with representative reranking methods in recommendation effectiveness?

    \item \textbf{RQ2:} Can FlashEvaluator generalize beyond recommendation to multi-candidate selection in text summarization?

    \item \textbf{RQ3:} How does FlashEvaluator scale with $K$ in terms of inference latency and throughput?

    \item \textbf{RQ4:} How does FlashEvaluator perform when deployed in a real-world multi-objective recommendation system?
\end{itemize}

\subsection{Experimental Setup}
\label{sec:experiments_setup}

\subsubsection{Datasets}

We conduct recommendation experiments on two public interaction datasets, \textbf{ML-1M}~\cite{harper2016movielens} and \textbf{Amazon-Books}~\cite{he2016ups}, and one industrial multi-stage dataset, \textbf{RecFlow}~\cite{liu2024recflow}. We further evaluate the cross-domain applicability of FlashEvaluator on the \textbf{CNN/DailyMail}~\cite{hermann2015teaching} text-summarization benchmark. Dataset statistics are summarized in Table~\ref{tab:dataset_stats}.

Since \textbf{ML-1M} and \textbf{Amazon-Books} do not provide request-level candidate pools from an upstream ranking stage, we train \textbf{BPR-MF}~\citep{rendle2009bpr} to simulate first-stage retrieval. For each user, the last six interactions are reserved as the test target and excluded from all training procedures, while the preceding interactions are used to construct length-six reranking samples. Training candidates are retrieved from the user's top-200 BPR-MF results over the full item corpus, and each test candidate pool contains 50 distinct items. \textbf{RecFlow} directly provides request-level candidates and features from a real multi-stage recommendation pipeline; following its official protocol, we use 120 upstream candidates per request, and a target sequence length of six. For \textbf{CNN/DailyMail}, we generate $K=16$ summaries for each reranker. We additionally report an expanded-candidate setting in which FlashEvaluator selects from $K=64$ summaries, examining whether its efficiency advantage can be converted into improved quality.

We additionally deploy FlashEvaluator in the reranking stage of the single-column recommendation feed on \textbf{Kuaishou}'s main app, which serves over 400 million daily active users with an average daily time spent exceeding two hours per user, providing a large-scale and latency-sensitive production environment. The production baseline and FlashEvaluator are assigned to two disjoint 10\% traffic buckets for seven days, using the same candidate generators, input features, and multi-objective serving targets.

\begin{figure*}[!ht]
    \centering
    \begin{subfigure}[t]{0.3\textwidth}
        \centering
        \includegraphics[width=\textwidth]{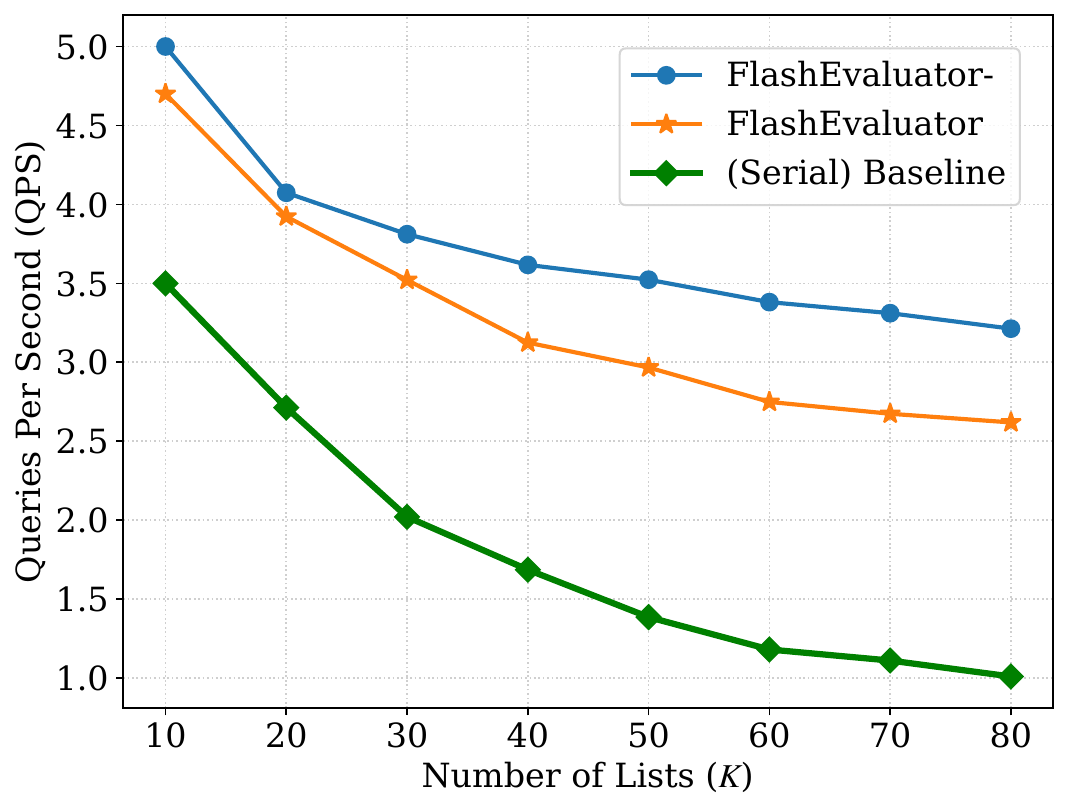}
        \caption{Average QPS of inference on Kuaishou.}
        \label{fig:online_QPS}
    \end{subfigure}
    \hfill
    \begin{subfigure}[t]{0.3\textwidth}
        \centering
        \includegraphics[width=\textwidth]{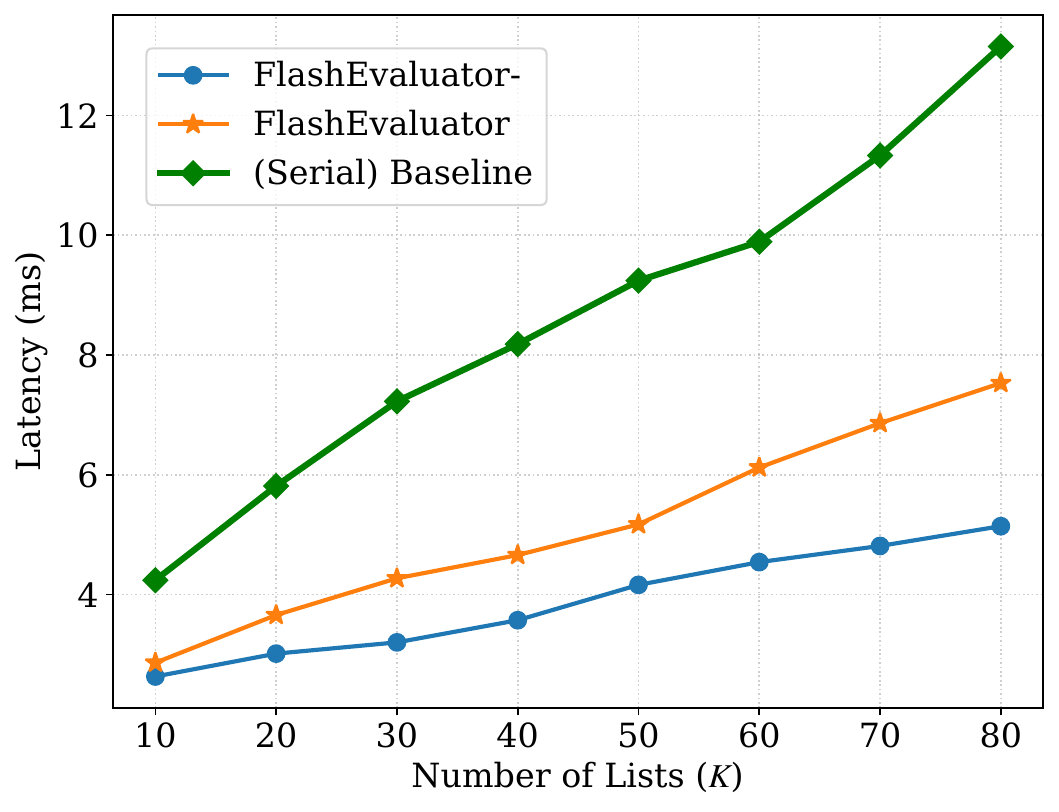}
        \caption{Average latency of inference on Kuaishou.}
        \label{fig:online_latency}
    \end{subfigure}
    \hfill
    \begin{subfigure}[t]{0.3\textwidth}
    \centering
    \includegraphics[width=\textwidth]{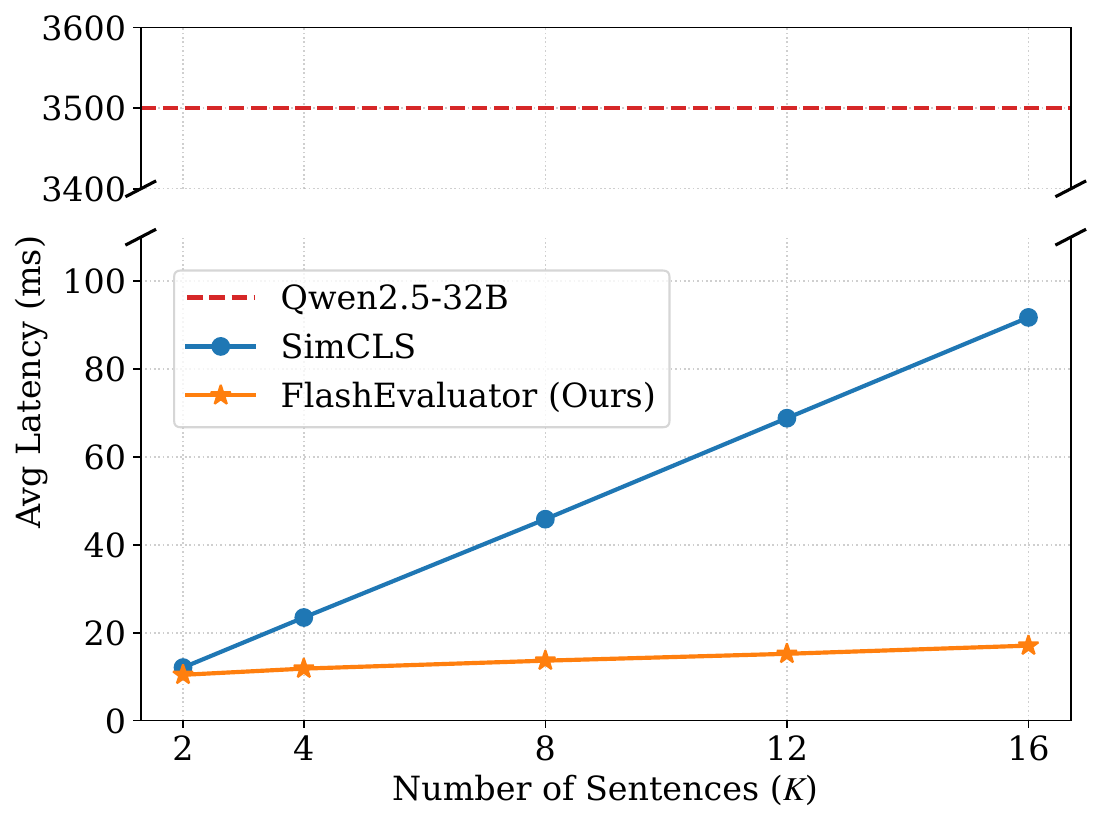}
    \caption{Average latency of inference process on CNN/DM dataset.}
    \label{fig:number_of_sentences}
    \end{subfigure}
    \caption{Comparison of efficiency}
\end{figure*}

\subsubsection{Baselines}

For recommendation, we compare FlashEvaluator with representative reranking methods, including generator-only methods \textbf{DNN}, \textbf{DCN}, \textbf{Seq2Slate}~\cite{bello2018seq2slate}, \textbf{DLCM}~\cite{ai2018learning}, \textbf{PRM}~\cite{pei2019personalized}, and \textbf{SetRank}~\cite{pang2020setrank}; the evaluator-only method \textbf{YOLOR}~\cite{wang2025yolor}; and generator-evaluator methods \textbf{NAR4Rec}~\cite{ren2024nar4rec} and \textbf{PIER}~\cite{shi2023pier}. To isolate the contribution of the evaluator, we retain the corresponding generators and replace their original evaluators with FlashEvaluator, denoted as \textbf{NAR4Rec+Flash} and \textbf{PIER+Flash}, respectively.

For text summarization, we compare with \textbf{SimCLS}~\cite{liu2021simcls} and \textbf{RankGPT}~\cite{sun2023chatgpt} with \textbf{Qwen2.5-32B}~\cite{yang2025qwen3} as backbone. Detailed descriptions and configurations are provided in Appendix~\ref{appdix:baselines}.

\subsubsection{Metrics and Implementation}

For recommendation, we report \textbf{NDCG@6}, \textbf{Precision@6}, \textbf{Recall@6}, and \textbf{F1@6}. For text summarization, we report \textbf{ROUGE-1, ROUGE-2, ROUGE-L}, and BERTScore~\textbf{(BS)}. Efficiency is evaluated using average inference latency and queries per second while varying the number of candidate sequences $K$. For the online A/B test, we optimize the long-term \text{retention} metrics (i.e., 7-day User Lifetime), the short-term \text{engagement metrics} (i.e., APP Duration Time and number of Active Users), , and diversity-oriented \text{ecosystem metric} (i.e., Cold-start Exposure).
We also quantify the computational advantage of our method by measuring the Average Inference \textbf{Latency} per query and the Throughput (Queries Per Second, \textbf{QPS}), compared to online baseline which utilizes one-by-one evaluation. Unless otherwise specified, the embedding dimension is 64, and all models are optimized using Adam~\cite{kingma2015adam}. We relegate implementation, baseline, online environment and data construction to Appendices~\ref{appdix:datasets}–\ref{appdix:implementation}.

\begin{table}[ht]
\centering
\caption{Online A/B testing results over a one-week period on Kuaishou APP. The proposed \textbf{FlashEvaluator} is compared against the production baseline. All reported improvements are statistically significant ($p < 0.05$).}
\label{tab:online_ab}
\begin{tabular}{llc}
\toprule
\textbf{Category} & \textbf{Metric} & \textbf{Relative Lift} \\
\midrule
\multirow{1}{*}{Retention} 
 & 7-day Lifetime (LT7) & $+0.039\%$ \\
\midrule
\multirow{2}{*}{Engagement} 
 & Duration Time(per User)  & $+0.142\%$ \\
 & Active User & $+0.077\%$\\
\midrule
\multirow{1}{*}{Ecosystem} 
 & Cold-start Exposure & $+2.507\%$ \\
\bottomrule
\end{tabular}
\end{table}


\subsection{Performance on Recommendation}
\label{sec:experiments_recommendation}

\paragraph{Offline results (RQ1).} Table~\ref{tab:main_results} reports the results on ML-1M, Amazon-Books, and RecFlow. FlashEvaluator consistently improves both generator backbones (NAR4Rec and PIER) across all three datasets. These results demonstrate that the proposed evaluator is compatible with different generators and benefits both public interaction benchmarks and industrial multi-stage recommendation. Component-level ablations and the effects of cross-sequence interaction and listwise supervision are reported in Appendix~\ref{appdix:ablation}. We further vary the number of generated candidate sequences in Appendix~\ref{appdix:candidate_expansion}, showing that the efficiency gain can be converted into improved recommendation quality through a larger candidate search space.

\paragraph{Online deployment (RQ3 \& RQ4).} Table~\ref{tab:online_ab} presents the seven-day A/B test on Kuaishou. FlashEvaluator yields statistically significant improvements in retention, engagement, and cold-start exposure, indicating that it integrates effectively with the production multi-objective prediction. At $K=50$, FlashEvaluator reduces inference latency by 44\% and increases QPS by 114\% relative to the production baseline. Considering that Kuaishou has over 400 million DAUs and each spends an average of more than 2 hours per day on the platform, this improvement is highly significant. Figures~\ref{fig:online_QPS} and~\ref{fig:online_latency} further show that its efficiency advantage becomes more pronounced as $K$ increases. Besides, removing cross-sequence interaction provides slightly higher efficiency.

\subsection{Performance on Text Summarization}
\label{sec:experiments_summarization}

Table~\ref{tab:summarization_rerank} reports the CNN/DailyMail results (RQ2). With the same $K=16$ candidates, FlashEvaluator achieves competitive performance against SimCLS and RankGPT while requiring substantially less reranking computation. We further enlarge its candidate set to $K=64$ under a latency budget comparable to SimCLS. The larger search space improves all metrics and allows FlashEvaluator to outperform SimCLS across the three generator backbones, demonstrating that its efficiency advantage can be converted into better candidate coverage and final selection quality. Figure~\ref{fig:number_of_sentences} confirms that FlashEvaluator scales substantially better with $K$ than candidate-factorized rerankers (RQ3).




\section{Conclusion}
\label{sec:conclusion}

We study the evaluator bottleneck in Generator-Evaluator systems and propose FlashEvaluator, a joint sequence-level evaluator that combines request-local QKV-Cache, index-based sequence assembly, and cross-sequence interaction. QKV-Cache reuses context-side key/value representations and request-conditioned candidate-side representations across candidate sequences, reducing repeated computation before sequence-specific modeling. Our analysis shows that, under a $K$-stable global capacity condition, joint score construction avoids the explicit $\sqrt{K}$ factor in a generic candidate-factorized generalization bound; in repeated-element and encoding-dominant regimes, QKV-Cache reduces the dominant candidate-side encoding term from $Kl$ occurrences to $U$ distinct elements. Experiments on three recommendation datasets and CNN/DailyMail demonstrate that FlashEvaluator consistently improves different Generator--Evaluator backbones while achieving favorable latency and throughput. In Kuaishou's production recommendation system, FlashEvaluator reduces inference latency by 44\% and increases QPS by 114\% at $K=50$, together with statistically significant gains in retention, engagement, and ecosystem metrics. On text summarization, FlashEvaluator achieves competitive quality with the same candidate set and converts its efficiency advantage into improved selection quality by evaluating more candidates with lower latency than SimCLS.



\bibliographystyle{ACM-Reference-Format}
\bibliography{sample-base}

\newpage
\appendix
\section{Theoretical Results}

\subsection{Detailed Proof of the Generalization Bounds}
\label{appdix:generalization}

We first recall the standard Rademacher generalization bound for bounded scalar-valued function classes\citep{bartlett2001rademacher}.

\begin{lemma}[Uniform deviation bound]
\label{lem:rad-general}
Let $\mathcal{G}$ be a class of functions mapping $(z,y)$ into $[0,1]$, and define
\[
R(g)=\mathbb{E}_{(z,y)\sim\mathcal{D}}[g(z,y)],\quad\widehat{R}_S(g)=\frac{1}{n}\sum_{t=1}^{n}g(z^{(t)},y^{(t)}).
\]
For any $\delta\in(0,1)$, with probability at least $1-\delta$ over an i.i.d.\ sample $S=\{(z^{(t)},y^{(t)})\}_{t=1}^{n}\sim\mathcal{D}^{n}$,
\begin{equation}
\sup_{g\in\mathcal{G}}\left(R(g)-\widehat{R}_S(g)\right)\leq2\mathfrak{R}_n(\mathcal{G})+3\sqrt{\frac{\log(2/\delta)}{2n}}.
\end{equation}
\end{lemma}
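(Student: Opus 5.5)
The plan is the classical symmetrization-plus-concentration argument of Bartlett and Mendelson. The stated bound is slightly loose when $\mathfrak{R}_n(\mathcal{G})$ is the expected Rademacher complexity. I will therefore prove the sharper empirical-complexity version, whose constant is exactly $3\sqrt{\log(2/\delta)/(2n)}$, and deduce the stated form from it.

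\textbf{Step 1: concentration of the supremum.} Define
\[
\Phi(S)=\sup_{g\in\mathcal{G}}\bigl(R(g)-\widehat{R}_S(g)\bigr).
\]
Since every $g$ takes values in $[0,1]$, replacing one sample point $(z^{(t)},y^{(t)})$ changes $\widehat{R}_S(g)$ by at most $1/n$ uniformly in $g$. Hence $\Phi$ has bounded differences $1/n$. McDiarmid's inequality then gives, with probability at least $1-\delta/2$,
\[
\Phi(S)\le\mathbb{E}\Phi(S)+\sqrt{\frac{\log(2/\delta)}{2n}}.
\]
Measurability of the supremum is handled in the usual way, for instance by assuming $\mathcal{G}$ is countable or restricting to a countable dense subclass.

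\textbf{Step 2: symmetrization.} Introduce an independent ghost sample $S'\sim\mathcal{D}^n$ and write $R(g)=\mathbb{E}_{S'}\widehat{R}_{S'}(g)$. Jensen's inequality pulls the supremum outside the expectation over $S'$. Each difference $g(z'^{(t)},y'^{(t)})-g(z^{(t)},y^{(t)})$ is symmetric in distribution, so inserting i.i.d.\ Rademacher signs $\sigma_t$ leaves the law unchanged. Splitting the supremum of the sum into two suprema yields
\[
\mathbb{E}\Phi(S)\le 2\mathfrak{R}_n(\mathcal{G}),
\]
which bounds the expected deviation by twice the expected Rademacher complexity.

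\textbf{Step 3: the three-term constant.} The empirical Rademacher complexity $\widehat{\mathfrak{R}}_S(\mathcal{G})$ also has bounded differences $1/n$. A second application of McDiarmid gives, with probability at least $1-\delta/2$,
\[
\mathfrak{R}_n(\mathcal{G})\le\widehat{\mathfrak{R}}_S(\mathcal{G})+\sqrt{\frac{\log(2/\delta)}{2n}}.
\]
A union bound over the events of Steps 1 and 3 then gives, with probability at least $1-\delta$,
\[
\Phi(S)\le 2\widehat{\mathfrak{R}}_S(\mathcal{G})+3\sqrt{\frac{\log(2/\delta)}{2n}}.
\]
For the form stated with $\mathfrak{R}_n$, Steps 1 and 2 alone already give $2\mathfrak{R}_n(\mathcal{G})+\sqrt{\log(2/\delta)/(2n)}$. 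This is dominated by the claimed bound, so the lemma holds in either reading.

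\textbf{Main obstacle.} The only delicate point is the symmetrization step: justifying the exchange of supremum and expectation, and the sign-flip invariance. The cleanest route is to carry out that step in detail and cite McDiarmid for the two concentration steps. I will also flag that the later application in Theorem~\ref{thm:comparison} uses the empirical-complexity version, where $\log(4/\delta)$ appears after a further union bound over the two evaluator classes.
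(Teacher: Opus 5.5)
Your argument is correct and is exactly the standard Bartlett--Mendelson symmetrization-plus-McDiarmid proof that the paper cites without reproducing. You are also right that, with $\mathfrak{R}_n$ denoting the expected complexity, Steps 1--2 already give the sharper bound $2\mathfrak{R}_n(\mathcal{G})+\sqrt{\log(2/\delta)/(2n)}$, so the stated bound holds a fortiori.

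One correction to your closing remark: Propositions~\ref{prop:joint-bound} and~\ref{prop:ind-bound} apply the lemma in its expected-complexity form, not the empirical one. They bound $\mathfrak{R}_n$ by taking the expectation of the uniform empirical capacity bound, and the $\log(4/\delta)$ comes only from splitting $\delta$ between those two propositions.
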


For a vector-valued function class $\mathcal{H}\subseteq\{h:\mathcal{Z}\rightarrow\mathbb{R}^{K}\}$,
we use the empirical vector Rademacher complexity
\begin{equation}
\widehat{\mathfrak{R}}_S(\mathcal{H})=\mathbb{E}_{\epsilon}\left[\sup_{h\in\mathcal{H}}\frac{1}{n}\sum_{t=1}^{n}\left\langle\boldsymbol{\epsilon}^{(t)},h(z^{(t)})\right\rangle\right],
\end{equation}
where $\boldsymbol{\epsilon}^{(t)}=(\epsilon_{t,1},\ldots,\epsilon_{t,K})$ contains independent Rademacher variables. Its expected counterpart is $\mathfrak{R}_n(\mathcal{H})=\mathbb{E}_S\left[\widehat{\mathfrak{R}}_S(\mathcal{H})\right]$.

For the two evaluator classes, define
\begin{align*}
\mathcal{G}_{\mathrm{joint}}&=\left\{(z,y)\mapsto\ell\bigl(E_\theta(z),y\bigr):\theta\in\Theta_{\mathrm{joint}}\right\},\\
\mathcal{G}_{\mathrm{ind}}&=\left\{(z,y)\mapsto\ell\bigl(E_\theta^{\mathrm{ind}}(z),y\bigr):\theta\in\Theta_{\mathrm{ind}}\right\}.
\end{align*}

We first provide a sufficient condition for Assumption~\ref{ass:capacity}.
\begin{lemma}
\label{appdix:lem:joint-capacity}
For every fixed finite collection
$S_m=\{z^{(1)},\ldots,z^{(m)}\}$ of $m$ valid inputs, suppose the
joint score class admits the decomposition
\[
\mathcal{H}_{\mathrm{joint}}
=
\Psi\circ\mathcal{H}_{\mathrm{hid}},
\quad
\mathcal{H}_{\mathrm{hid}}
=
\{z\mapsto h_\theta(z):\theta\in\Theta_{\mathrm{joint}}\},
\]
where $\Psi$ is $\Lambda$-Lipschitz under the corresponding
Hilbert-space norm, with $\Lambda$ independent of $K$, and
\[
\widehat{\mathfrak{R}}_{S_m}
(\mathcal{H}_{\mathrm{hid}})
\leq
\frac{C_{\mathrm{hid}}}{\sqrt{m}}.
\]
Then
\[
\widehat{\mathfrak{R}}_{S_m}
(\mathcal{H}_{\mathrm{joint}})
\leq
c_{\mathrm{vc}}
\frac{\Lambda C_{\mathrm{hid}}}{\sqrt{m}},
\]
where $c_{\mathrm{vc}}$ is the universal constant from the vector
contraction inequality.
\end{lemma}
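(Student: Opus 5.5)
The plan is a direct application of a vector-contraction inequality for Lipschitz maps into Hilbert spaces, in the form due to Maurer. Fix the collection $S_m=\{z^{(1)},\ldots,z^{(m)}\}$ and write $\mathcal{H}_{\mathrm{hid}}$ as a class of maps into a separable Hilbert space $\mathcal{V}$, with empirical complexity defined coordinatewise as for the vector case above:
\[
\widehat{\mathfrak{R}}_{S_m}(\mathcal{H}_{\mathrm{hid}})=\mathbb{E}_{\epsilon}\Bigl[\sup_{\theta}\tfrac1m\textstyle\sum_{t}\langle\boldsymbol{\epsilon}^{(t)},h_\theta(z^{(t)})\rangle\Bigr].
\]
Here the $\boldsymbol{\epsilon}^{(t)}$ are independent Rademacher sequences indexed by a fixed orthonormal basis of $\mathcal{V}$.

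Every element of $\mathcal{H}_{\mathrm{joint}}$ has the form $z\mapsto\Psi(h_\theta(z))$, and its $K$ output coordinates are $\Psi_k(h_\theta(z))$. The quantity to bound is therefore
\[
\mathbb{E}_\epsilon\sup_\theta\tfrac1m\sum_{t=1}^m\sum_{k=1}^K\epsilon_{t,k}\Psi_k\bigl(h_\theta(z^{(t)})\bigr).
\]
The $\Lambda$-Lipschitz hypothesis on $\Psi$ gives
\[
\|\Psi(v)-\Psi(v')\|_2\le\Lambda\|v-v'\|_{\mathcal{V}}.
\]
This is exactly the hypothesis of Maurer's inequality, applied with index set $\Theta_{\mathrm{joint}}$ and functions $\psi_t:=\Psi$ for all $t$ (the same map at every sample). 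The inequality yields
\[
\mathbb{E}_\epsilon\sup_\theta\sum_{t}\langle\boldsymbol{\epsilon}^{(t)},\Psi(h_\theta(z^{(t)}))\rangle\le\sqrt2\,\Lambda\,\mathbb{E}_\epsilon\sup_\theta\sum_t\langle\boldsymbol{\epsilon}^{(t)},h_\theta(z^{(t)})\rangle.
\]
Dividing by $m$ and inserting the assumed bound $C_{\mathrm{hid}}/\sqrt m$ gives the claim with $c_{\mathrm{vc}}=\sqrt2$.

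The main technical point is the applicability of the inequality.

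\textbf{Constant independent of $K$.} Maurer's bound has a universal constant that does not depend on the output dimension $K$ or on $\dim\mathcal{V}$. This is what makes $c_{\mathrm{vc}}\Lambda C_{\mathrm{hid}}$ free of $K$. I would stress it, since it is the whole point of the lemma in contrast to the $\sqrt K$ factor appearing for $\mathcal{H}_{\mathrm{ind}}$.

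\textbf{Infinite-dimensional $\mathcal{V}$.} If $\mathcal{V}$ is infinite-dimensional, I would first prove the inequality for finite-dimensional truncations (projections onto the span of the first $N$ basis vectors). I would then pass to the limit by monotone convergence. This needs the supremum to be measurable; I would handle that by taking the supremum over a countable dense subset of $\Theta_{\mathrm{joint}}$, or simply by assuming $\Theta_{\mathrm{joint}}$ is countable or separable.

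\textbf{Degenerate case.} If the supremum over $\mathcal{H}_{\mathrm{hid}}$ is infinite, the statement is vacuous.

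Apart from these points the argument is a one-line chain of inequalities.
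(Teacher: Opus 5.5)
Your route is the paper's own one-line argument. The step it rests on, however, is not an instance of Maurer's theorem, and it is in fact false.

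Maurer's inequality concerns \emph{real-valued} outer maps $\psi_t:\ell_2\to\mathbb{R}$, with a single Rademacher variable $\epsilon_t$ per sample on the left. Its constant $\sqrt{2}$ does not depend on the dimension of the \emph{domain} of $\psi_t$, but the theorem has no output dimension at all. Your left-hand side has an $\mathbb{R}^K$-valued $\Psi$ and $K$ Rademacher variables per sample. For such vector-to-vector contractions, no $K$-free constant exists on a fixed collection.

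Here is a counterexample. Take $m=1$, $N=2^K$, $\mathcal{V}=\mathbb{R}^N$, and $h_\theta(z^{(1)})=e_\theta$ for $\theta\in[N]$. Let $\theta\mapsto\sigma_\theta$ enumerate $\{\pm1\}^K$ and set $\Psi(e_\theta)=\sigma_\theta/\sqrt{2K}$. For $\theta\neq\theta'$ we have $\|\sigma_\theta-\sigma_{\theta'}\|_2\le 2\sqrt{K}$ and $\|e_\theta-e_{\theta'}\|_2=\sqrt{2}$, so this map is $1$-Lipschitz on these points. By Kirszbraun's theorem it extends to a $1$-Lipschitz map $\mathbb{R}^N\to\mathbb{R}^K$. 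Then $\widehat{\mathfrak{R}}_{S_1}(\mathcal{H}_{\mathrm{hid}})=\mathbb{E}\max_\theta\epsilon_\theta\le 1$, whereas $\widehat{\mathfrak{R}}_{S_1}(\mathcal{H}_{\mathrm{joint}})=\mathbb{E}\max_{\sigma\in\{\pm1\}^K}\langle\boldsymbol{\epsilon},\sigma\rangle/\sqrt{2K}=\sqrt{K/2}$. So your displayed inequality fails. So does the lemma, read as a statement about one fixed $S_m$: take $\Lambda=C_{\mathrm{hid}}=1$ and $K>2c_{\mathrm{vc}}^2$.

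The legitimate use of Maurer here is coordinatewise, on $\Psi_1,\ldots,\Psi_K$ with one sign per pair $(t,k)$. That amounts to replicating every sample $K$ times, and the capacity bound then returns the same $\sqrt{K}$ factor as in Lemma~\ref{lem:ind-complexity}, which the lemma is meant to avoid. The paper's proof invokes the contraction inequality in the same unjustified way.

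There are two ways to repair this. For the generalization bound you do not need $\mathcal{H}_{\mathrm{joint}}$ at all. The map $v\mapsto\ell(\Psi(v),y^{(t)})$ is real-valued and $L_\ell\Lambda$-Lipschitz on $\mathcal{V}$, so Maurer applies honestly to the loss class. This gives $\widehat{\mathfrak{R}}_S(\mathcal{G}_{\mathrm{joint}})\le\sqrt{2}L_\ell\Lambda\,\widehat{\mathfrak{R}}_S(\mathcal{H}_{\mathrm{hid}})$, and hence Proposition~\ref{prop:joint-bound} directly.

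For the lemma itself, go through Gaussian complexity (signs replaced by standard normals). The centered Gaussian processes $X_\theta=\sum_t\langle\boldsymbol{g}^{(t)},\Psi(h_\theta(z^{(t)}))\rangle$ and $Y_\theta=\Lambda\sum_t\langle\boldsymbol{\gamma}^{(t)},h_\theta(z^{(t)})\rangle$ satisfy $\mathbb{E}(X_\theta-X_{\theta'})^2\le\mathbb{E}(Y_\theta-Y_{\theta'})^2$ by the Lipschitz property. Sudakov--Fernique then gives the vector-to-vector contraction with constant $1$, and Rademacher complexity is at most $\sqrt{\pi/2}$ times Gaussian complexity.

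This requires a Gaussian bound $C_{\mathrm{hid}}/\sqrt{m}$ for $\mathcal{H}_{\mathrm{hid}}$. A Rademacher bound on the single collection $S_m$ does not provide it: in the example above, the hidden Gaussian complexity is of order $\sqrt{K}$. It does follow if the Rademacher bound holds for every collection, as Assumption~\ref{ass:capacity} is phrased. Apply it to $S_m$ with each point repeated $r$ times and let $r\to\infty$ using the central limit theorem. This yields the lemma with $c_{\mathrm{vc}}=\sqrt{\pi/2}$.
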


\begin{proof}
For the fixed input collection $S_m$, applying the vector
contraction inequality to
$\mathcal{H}_{\mathrm{joint}}
=\Psi\circ\mathcal{H}_{\mathrm{hid}}$ gives
\[
\widehat{\mathfrak{R}}_{S_m}
(\mathcal{H}_{\mathrm{joint}})
\leq
c_{\mathrm{vc}}\Lambda
\widehat{\mathfrak{R}}_{S_m}
(\mathcal{H}_{\mathrm{hid}}).
\]
Substituting the assumed empirical complexity bound on
$\mathcal{H}_{\mathrm{hid}}$ yields
\[
\widehat{\mathfrak{R}}_{S_m}
(\mathcal{H}_{\mathrm{joint}})
\leq
c_{\mathrm{vc}}
\frac{\Lambda C_{\mathrm{hid}}}{\sqrt{m}},
\]
which proves the result.
\end{proof}

For notational simplicity, we absorb the universal constant
$c_{\mathrm{vc}}$ into $\Lambda$ in the subsequent results.
Taking expectation over $S_m$ also yields the corresponding
expected-complexity bound
\[
\mathfrak{R}_m(\mathcal{H}_{\mathrm{joint}})
\leq
\frac{\Lambda C_{\mathrm{hid}}}{\sqrt{m}}.
\]

\begin{lemma}[Vector contraction]
\label{lem:contraction}
Under Assumption~\ref{ass:loss}, for $a\in\{\mathrm{joint},\mathrm{ind}\}$,
\begin{equation}
\mathfrak{R}_n(\mathcal{G}_{a})\leq\sqrt{2}L_\ell\mathfrak{R}_n(\mathcal{H}_{a}).
\end{equation}
\end{lemma}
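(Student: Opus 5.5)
The plan is to apply Maurer's vector-contraction inequality directly at the level of the fixed sample, and then average over the sample to pass to $\mathfrak{R}_n$.

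Fix $a\in\{\mathrm{joint},\mathrm{ind}\}$ and a sample $S=\{(z^{(t)},y^{(t)})\}_{t=1}^n$. For each $t$, define the scalar map
\[
\phi_t:\mathbb{R}^K\to\mathbb{R},\qquad \phi_t(\bm{s})=\ell(\bm{s},y^{(t)}).
\]
By Assumption~\ref{ass:loss}, each $\phi_t$ is $L_\ell$-Lipschitz with respect to $\|\cdot\|_2$, with a constant that does not depend on $t$, on $y^{(t)}$, or on $K$. The empirical Rademacher complexity of $\mathcal{G}_a$ on $S$ is then
\[
\mathbb{E}_{\epsilon}\Bigl[\sup_{h\in\mathcal{H}_a}\frac1n\sum_{t=1}^n\epsilon_t\,\phi_t\bigl(h(z^{(t)})\bigr)\Bigr].
\]
This is exactly the left-hand side of Maurer's inequality (Maurer, 2016). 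That inequality states that for $L$-Lipschitz maps $\phi_t$ on $\ell_2$,
\[
\mathbb{E}\sup_h\sum_t\epsilon_t\phi_t(h(z_t))\le\sqrt2\,L\,\mathbb{E}\sup_h\sum_t\sum_k\epsilon_{t,k}h_k(z_t).
\]
The right-hand side is $n$ times the vector Rademacher complexity $\widehat{\mathfrak{R}}_S(\mathcal{H}_a)$ as defined above, with independent Rademacher variables $\epsilon_{t,k}$. Dividing by $n$ gives
\[
\widehat{\mathfrak{R}}_S(\mathcal{G}_a)\le\sqrt2L_\ell\,\widehat{\mathfrak{R}}_S(\mathcal{H}_a).
\]
Taking the expectation over $S\sim\mathcal{D}^n$ on both sides then yields the stated bound on $\mathfrak{R}_n$.

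\textbf{Main obstacle.} The key point is checking that Maurer's inequality is applicable in this setting. Three things need verification.
\begin{itemize}
\item The Lipschitz maps must be allowed to depend on $t$ through the label $y^{(t)}$. Maurer's statement allows $t$-dependent $\phi_t$, so this is fine.
\item The supremum is over a possibly uncountable class. This is handled by the usual reduction to countable or finite subclasses together with monotone convergence, assuming the standard measurability.
\item The Lipschitz norm must be the Euclidean one, matching Assumption~\ref{ass:loss}. This is what produces the factor $\sqrt2$ and keeps $L_\ell$ free of $K$.
\end{itemize}
No structure of $E_\theta$ versus $E^{\mathrm{ind}}_\theta$ is used, so the same argument covers both classes.
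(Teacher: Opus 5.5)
Your proposal is correct and follows the paper's proof essentially verbatim: you fix the sample, note that each $\bm{s}\mapsto\ell(\bm{s},y^{(t)})$ is $L_\ell$-Lipschitz in $\|\cdot\|_2$, invoke Maurer's vector contraction inequality, and then average over $S$. The paper additionally subtracts the constant $\ell(\mathbf{0},y^{(t)})$ before applying the inequality; this leaves the complexity unchanged, and it is not needed for the version of Maurer's bound without absolute values that you use.
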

\begin{proof}
For each fixed label $y^{(t)}$, the map $\mathbf{s}\mapsto\ell(\mathbf{s},y^{(t)})$ is $L_\ell$-Lipschitz with respect to the Euclidean norm. Subtracting the constant $\ell(\mathbf{0},y^{(t)})$ does not change the Rademacher complexity. The result therefore follows directly from the vector contraction inequality of \citet{maurer2016vector}. Taking expectation over the sample completes the proof.
\end{proof}

\subsubsection{Joint Evaluator}
We first derive the bound for FlashEvaluator.
\begin{proposition}[Generalization bound for joint evaluator]
\label{prop:joint-bound}
Suppose Assumptions~\ref{ass:loss} and~\ref{ass:capacity} hold. For any $\delta\in(0,1)$, with probability at least $1-\delta$, the following holds simultaneously for every $\theta\in\Theta_{\mathrm{joint}}$:
\begin{equation}
\mathcal{R}(\theta)\leq\widehat{\mathcal{R}}_S(\theta)+\frac{2\sqrt{2}L_\ell\Lambda C_{\mathrm{hid}}}{\sqrt{n}}+3\sqrt{\frac{\log(2/\delta)}{2n}}.
\end{equation}
\end{proposition}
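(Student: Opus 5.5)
The plan is to chain the three ingredients already in place: Lemma~\ref{lem:rad-general}, Lemma~\ref{lem:contraction}, and Assumption~\ref{ass:capacity}.

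First, apply Lemma~\ref{lem:rad-general} to the loss class $\mathcal{G}_{\mathrm{joint}}$. By Assumption~\ref{ass:loss}, every $g\in\mathcal{G}_{\mathrm{joint}}$ maps into $[0,1]$, so the lemma applies. For $g(z,y)=\ell(E_\theta(z),y)$ we have $R(g)=\mathcal{R}(\theta)$ and $\widehat{R}_S(g)=\widehat{\mathcal{R}}_S(\theta)$. The lemma then gives, with probability at least $1-\delta$ and simultaneously for all $\theta\in\Theta_{\mathrm{joint}}$,
\[
\mathcal{R}(\theta)-\widehat{\mathcal{R}}_S(\theta)\leq 2\mathfrak{R}_n(\mathcal{G}_{\mathrm{joint}})+3\sqrt{\frac{\log(2/\delta)}{2n}}.
\]
Uniformity over $\theta$ is automatic because the lemma bounds the supremum over the class.

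Second, I bound $\mathfrak{R}_n(\mathcal{G}_{\mathrm{joint}})$ using Lemma~\ref{lem:contraction} with $a=\mathrm{joint}$, which gives $\mathfrak{R}_n(\mathcal{G}_{\mathrm{joint}})\leq\sqrt{2}L_\ell\,\mathfrak{R}_n(\mathcal{H}_{\mathrm{joint}})$. Two details need checking:
\begin{itemize}
\item The loss depends on the label $y^{(t)}$. Because we condition on the sample, each coordinate $t$ has its own $L_\ell$-Lipschitz function $\mathbf{s}\mapsto\ell(\mathbf{s},y^{(t)})$, and Maurer's inequality allows index-dependent Lipschitz maps.
\item Subtracting $\ell(\mathbf{0},y^{(t)})$ does not change the complexity, since it is a constant in $\theta$ with zero Rademacher mean.
\end{itemize}

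Third, I invoke Assumption~\ref{ass:capacity} with $m=n$ and $S_m=\{z^{(1)},\ldots,z^{(n)}\}$. This gives $\widehat{\mathfrak{R}}_S(\mathcal{H}_{\mathrm{joint}})\leq\Lambda C_{\mathrm{hid}}/\sqrt{n}$ for every realization of the sample, so taking $\mathbb{E}_S$ yields $\mathfrak{R}_n(\mathcal{H}_{\mathrm{joint}})\leq\Lambda C_{\mathrm{hid}}/\sqrt{n}$. This is the expected-complexity remark following Lemma~\ref{appdix:lem:joint-capacity}, with $c_{\mathrm{vc}}$ absorbed into $\Lambda$. Chaining the three steps gives the leading term $2\cdot\sqrt{2}L_\ell\cdot\Lambda C_{\mathrm{hid}}/\sqrt{n}$, which is exactly the claim.

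The main obstacle is bookkeeping rather than analysis. The vector Rademacher complexity of $\mathcal{H}_{\mathrm{joint}}$ must be the one used in both the contraction lemma and the capacity assumption: $K$ independent signs per sample, inner product with the score vector, and normalization by $1/n$. Otherwise constants or hidden $K$-dependence could slip in. Since the assumption bounds this quantity directly by a $K$-free constant, no factor of $K$ appears. This is the point of contrast with the independent evaluator, where one expands $\langle\boldsymbol{\epsilon}^{(t)},E^{\mathrm{ind}}_\theta(z^{(t)})\rangle$ into $Kn$ scalar terms and rescales the base-class bound, producing the $\sqrt{K}$ factor.

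Finally, the $\log(2/\delta)$ here versus $\log(4/\delta)$ in Theorem~\ref{thm:comparison} reflects a union bound over the two evaluator classes when both bounds are stated together. For this proposition alone, a single application with confidence $\delta$ suffices.
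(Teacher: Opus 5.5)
Your proposal is correct and matches the paper's proof essentially step for step. Like the paper, you use Lemma~\ref{lem:contraction} and Assumption~\ref{ass:capacity} to get $\mathfrak{R}_n(\mathcal{G}_{\mathrm{joint}})\leq\sqrt{2}L_\ell\Lambda C_{\mathrm{hid}}/\sqrt{n}$, then substitute into Lemma~\ref{lem:rad-general} with confidence $\delta$. (The appeal to absorbing $c_{\mathrm{vc}}$ is unnecessary, since Assumption~\ref{ass:capacity} already states the empirical bound with $\Lambda C_{\mathrm{hid}}$ directly.)
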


\begin{proof}
Applying Lemma~\ref{lem:contraction} and
Assumption~\ref{ass:capacity} gives
\[
\mathfrak{R}_n(\mathcal{G}_{\mathrm{joint}})\leq\sqrt{2}L_\ell\mathfrak{R}_n(\mathcal{H}_{\mathrm{joint}})\leq\sqrt{2}L_\ell\frac{\Lambda C_{\mathrm{hid}}}{\sqrt{n}}.
\]
Substituting this inequality into Lemma~\ref{lem:rad-general} yields
\[
\begin{aligned}
\mathcal{R}(\theta)-\widehat{\mathcal{R}}_S(\theta)&\leq2\mathfrak{R}_n(\mathcal{G}_{\mathrm{joint}})+3\sqrt{\frac{\log(2/\delta)}{2n}}\\
&\leq\frac{2\sqrt{2}L_\ell\Lambda C_{\mathrm{hid}}}{\sqrt{n}}+3\sqrt{\frac{\log(2/\delta)}{2n}},
\end{aligned}
\]
uniformly over $\theta\in\Theta_{\mathrm{joint}}$.
\end{proof}

\subsubsection{Independent Evaluator}

For each training instance $z^{(t)}=(x^{(t)},C^{(t)},L_{1:K}^{(t)})$, define the corresponding per-sequence inputs as
\[
z^{(t,k)}
=
(x^{(t)},L_k^{(t)}),
\quad
t\in[n],\quad k\in[K].
\]
Let $\widetilde{S}=\left\{z^{(t,k)}:t\in[n],\,k\in[K]\right\}$ denote the resulting flattened collection of $nK$ per-sequence inputs. The elements of $\widetilde{S}$ need not be independent because the $K$ candidate sequences from the same instance share the request context. 
\begin{lemma}[Score-class complexity]
\label{lem:ind-complexity}
For every fixed sample $S$,
\begin{equation}
\widehat{\mathfrak{R}}_S\bigl(\mathcal{H}_{\mathrm{ind}}\bigr)=K\,\widehat{\mathfrak{R}}_{\widetilde{S}}\bigl(\mathcal{H}_{\mathrm{base}}\bigr).
\end{equation}
Consequently, under Assumption~\ref{ass:capacity},
\begin{equation}
\mathfrak{R}_n\bigl(\mathcal{H}_{\mathrm{ind}}\bigr)\leq C_{\mathrm{base}}\sqrt{\frac{K}{n}}.
\end{equation}
\end{lemma}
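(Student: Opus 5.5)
Expanding the definition of the vector Rademacher complexity gives the identity directly. Every $h\in\mathcal H_{\mathrm{ind}}$ has the form $h(z)=(f_\theta(x,L_1),\ldots,f_\theta(x,L_K))$ with one shared $\theta$, so
\[
\widehat{\mathfrak R}_S(\mathcal H_{\mathrm{ind}})
=\mathbb E_{\epsilon}\Bigl[\sup_{\theta\in\Theta_{\mathrm{ind}}}\frac1n\sum_{t=1}^n\sum_{k=1}^K\epsilon_{t,k}f_\theta\bigl(z^{(t,k)}\bigr)\Bigr].
\]
The double sum runs over the $nK$ points of $\widetilde S$, each carrying its own independent sign $\epsilon_{t,k}$. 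Rewrite the normalisation as $\frac1n=K\cdot\frac1{nK}$ and pull the constant $K$ out of the supremum and the expectation. What remains is exactly the scalar empirical Rademacher complexity of $\mathcal H_{\mathrm{base}}$ on the $nK$-point collection $\widetilde S$. This gives $\widehat{\mathfrak R}_S(\mathcal H_{\mathrm{ind}})=K\,\widehat{\mathfrak R}_{\widetilde S}(\mathcal H_{\mathrm{base}})$.

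The one subtlety is that the points of $\widetilde S$ are not independent and may even repeat. This does not matter, because the empirical Rademacher complexity is defined conditionally on a fixed input collection. The only randomness is in the signs, and these are independent by construction. Assumption~\ref{ass:capacity} is also stated for every finite collection of $m$ valid inputs, not for i.i.d.\ samples. I will check that it is read as a multiset statement, so that duplicated $(x,L)$ pairs are still covered.

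For the consequence, apply Assumption~\ref{ass:capacity} with $m=nK$ and $S_m=\widetilde S$. This gives $\widehat{\mathfrak R}_{\widetilde S}(\mathcal H_{\mathrm{base}})\le C_{\mathrm{base}}/\sqrt{nK}$, hence
\[
\widehat{\mathfrak R}_S(\mathcal H_{\mathrm{ind}})\le K\frac{C_{\mathrm{base}}}{\sqrt{nK}}=C_{\mathrm{base}}\sqrt{\frac Kn}.
\]
This bound holds for every fixed $S$, so taking the expectation over $S\sim\mathcal D^n$ yields the stated bound on $\mathfrak R_n(\mathcal H_{\mathrm{ind}})$.

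No step is a real obstacle. The only point needing care is the bookkeeping of the normalisation, namely $1/n$ in the vector complexity against $1/(nK)$ in the flattened scalar complexity, since that conversion is what produces the $\sqrt K$ factor.
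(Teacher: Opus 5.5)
Your proposal is correct and matches the paper's proof essentially line for line. You expand the vector complexity and rewrite $1/n = K\cdot 1/(nK)$, which identifies the scalar complexity of $\mathcal{H}_{\mathrm{base}}$ on the flattened $nK$-point collection $\widetilde{S}$; you then apply Assumption~\ref{ass:capacity} with $m=nK$ and take the expectation over $S$, with your caveat about dependent or repeated points matching the paper's reading of the capacity condition as a uniform bound over every collection of $m$ inputs.
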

\begin{proof}
By the definition of the independent evaluator,
\[
E_\theta^{\mathrm{ind}}(z^{(t)})
=
\bigl(
f_\theta(z^{(t,1)}),
\ldots,
f_\theta(z^{(t,K)})
\bigr).
\]
Therefore,
\[
\begin{aligned}
\widehat{\mathfrak{R}}_S
\bigl(\mathcal{H}_{\mathrm{ind}}\bigr)
&=
\mathbb{E}_{\epsilon}
\left[
\sup_{\theta\in\Theta_{\mathrm{ind}}}
\frac{1}{n}
\sum_{t=1}^{n}
\sum_{k=1}^{K}
\epsilon_{t,k}
f_\theta(z^{(t,k)})
\right]\\
&=
K\,
\mathbb{E}_{\epsilon}
\left[
\sup_{\theta\in\Theta_{\mathrm{ind}}}
\frac{1}{nK}
\sum_{t=1}^{n}
\sum_{k=1}^{K}
\epsilon_{t,k}
f_\theta(z^{(t,k)})
\right]\\
&=
K\,
\widehat{\mathfrak{R}}_{\widetilde{S}}
\bigl(\mathcal{H}_{\mathrm{base}}\bigr).
\end{aligned}
\]

The capacity condition is understood as a uniform empirical bound:
for every collection of $m$ inputs,
\[
\widehat{\mathfrak{R}}
\bigl(\mathcal{H}_{\mathrm{base}}\bigr)
\leq
\frac{C_{\mathrm{base}}}{\sqrt{m}}.
\]
Applying it to the fixed collection $\widetilde{S}$ of size $nK$
gives
\[
\widehat{\mathfrak{R}}_S
\bigl(\mathcal{H}_{\mathrm{ind}}\bigr)
\leq
K\frac{C_{\mathrm{base}}}{\sqrt{nK}}
=
C_{\mathrm{base}}
\sqrt{\frac{K}{n}}.
\]
Taking expectation over $S$ proves the result.
\end{proof}
\begin{corollary}[Loss-class complexity]
\label{cor:ind-scaling}
Under Assumptions~\ref{ass:loss} and~\ref{ass:capacity},
\begin{equation}
\mathfrak{R}_n
\bigl(\mathcal{G}_{\mathrm{ind}}\bigr)
\leq
\sqrt{2}L_\ell C_{\mathrm{base}}
\sqrt{\frac{K}{n}}.
\end{equation}
\end{corollary}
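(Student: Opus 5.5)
This corollary is a direct composition of two results already established: Lemma~\ref{lem:contraction} with $a=\mathrm{ind}$, and the score-class bound of Lemma~\ref{lem:ind-complexity}. The plan has three steps.

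First, I apply Lemma~\ref{lem:contraction} to the independent evaluator. Under Assumption~\ref{ass:loss}, for each fixed label $y^{(t)}$, the map $\bm{s}\mapsto\ell(\bm{s},y^{(t)})$ is $L_\ell$-Lipschitz in the Euclidean norm on $\mathbb{R}^K$, with a constant that does not depend on $K$. The vector contraction inequality of \citet{maurer2016vector} then gives
\[
\mathfrak{R}_n(\mathcal{G}_{\mathrm{ind}})\leq\sqrt{2}L_\ell\,\mathfrak{R}_n(\mathcal{H}_{\mathrm{ind}}).
\]

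Second, I invoke Lemma~\ref{lem:ind-complexity}. For a fixed sample it identifies $\widehat{\mathfrak{R}}_S(\mathcal{H}_{\mathrm{ind}})$ with $K$ times the empirical complexity of $\mathcal{H}_{\mathrm{base}}$ on the flattened collection $\widetilde S$ of $nK$ per-sequence inputs. Assumption~\ref{ass:capacity}, applied with $m=nK$, then bounds this by $K\,C_{\mathrm{base}}/\sqrt{nK}=C_{\mathrm{base}}\sqrt{K/n}$. Taking expectation over $S$ yields $\mathfrak{R}_n(\mathcal{H}_{\mathrm{ind}})\le C_{\mathrm{base}}\sqrt{K/n}$.

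Third, chaining the two inequalities gives the claimed bound
\[
\mathfrak{R}_n(\mathcal{G}_{\mathrm{ind}})\le\sqrt{2}L_\ell C_{\mathrm{base}}\sqrt{K/n}.
\]

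The only delicate point is the applicability of the capacity bound to $\widetilde S$, since its $nK$ elements are not independent: candidates in the same instance share the request context. I would address this by noting that Assumption~\ref{ass:capacity} is stated for \emph{every} finite collection of $m$ valid inputs. The empirical Rademacher complexity conditions on the inputs, so no independence among the points of $\widetilde S$ is needed. The remaining care is to keep the contraction step at the level of expected complexities, or equivalently to apply it conditionally on $S$ and then take expectations, so that the constants match exactly.
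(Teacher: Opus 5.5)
Your proposal is correct and matches the paper's proof, which simply chains Lemma~\ref{lem:contraction} (with $a=\mathrm{ind}$) and Lemma~\ref{lem:ind-complexity}. Your point that the capacity bound still applies to the dependent flattened collection $\widetilde S$ is also how the paper handles that step: it too reads Assumption~\ref{ass:capacity} as a uniform empirical bound over arbitrary input collections.
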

\begin{proof}
Combining Lemmas~\ref{lem:contraction} and
\ref{lem:ind-complexity} gives
\[
\mathfrak{R}_n
\bigl(\mathcal{G}_{\mathrm{ind}}\bigr)\leq
\sqrt{2}L_\ell
\mathfrak{R}_n
\bigl(\mathcal{H}_{\mathrm{ind}}\bigr)\leq
\sqrt{2}L_\ell C_{\mathrm{base}}
\sqrt{\frac{K}{n}}.
\]
\end{proof}
\begin{proposition}[Generalization bound for independent evaluator]
\label{prop:ind-bound}
Suppose Assumptions~\ref{ass:loss} and~\ref{ass:capacity} hold.
For any $\delta\in(0,1)$, with probability at least $1-\delta$,
the following holds simultaneously for every
$\theta\in\Theta_{\mathrm{ind}}$:
\begin{equation}
\mathcal{R}^{\mathrm{ind}}(\theta)\leq\widehat{\mathcal{R}}^{\mathrm{ind}}_S(\theta)+2\sqrt{2}L_\ell C_{\mathrm{base}}\sqrt{\frac{K}{n}}+3\sqrt{\frac{\log(2/\delta)}{2n}}.
\end{equation}
\end{proposition}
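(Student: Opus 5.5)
This follows the template of Proposition~\ref{prop:joint-bound}, with the joint score class replaced by the candidate-factorized one. The plan is to instantiate Lemma~\ref{lem:rad-general} with the loss class $\mathcal{G}_{\mathrm{ind}}$. By Assumption~\ref{ass:loss}, every $g\in\mathcal{G}_{\mathrm{ind}}$ maps $(z,y)$ into $[0,1]$, so the lemma applies. With probability at least $1-\delta$ over $S\sim\mathcal{D}^n$, uniformly over $\theta\in\Theta_{\mathrm{ind}}$,
\[
\mathcal{R}^{\mathrm{ind}}(\theta)-\widehat{\mathcal{R}}^{\mathrm{ind}}_S(\theta)\leq 2\mathfrak{R}_n(\mathcal{G}_{\mathrm{ind}})+3\sqrt{\frac{\log(2/\delta)}{2n}},
\]
since $R(g)$ and $\widehat{R}_S(g)$ coincide with $\mathcal{R}^{\mathrm{ind}}(\theta)$ and $\widehat{\mathcal{R}}^{\mathrm{ind}}_S(\theta)$ for $g=\ell(E^{\mathrm{ind}}_\theta(\cdot),\cdot)$.

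Next I bound the complexity term by invoking Corollary~\ref{cor:ind-scaling}, which gives $\mathfrak{R}_n(\mathcal{G}_{\mathrm{ind}})\leq\sqrt{2}L_\ell C_{\mathrm{base}}\sqrt{K/n}$. That corollary rests on two earlier results:
\begin{itemize}
\item the vector contraction step (Lemma~\ref{lem:contraction}), which passes from the loss class to the score class $\mathcal{H}_{\mathrm{ind}}$ at the price of the factor $\sqrt{2}L_\ell$;
\item the flattening identity (Lemma~\ref{lem:ind-complexity}), which writes $\widehat{\mathfrak{R}}_S(\mathcal{H}_{\mathrm{ind}})=K\,\widehat{\mathfrak{R}}_{\widetilde{S}}(\mathcal{H}_{\mathrm{base}})$ and then applies the uniform empirical capacity bound of Assumption~\ref{ass:capacity} to the $nK$ flattened inputs.
\end{itemize}
Doubling this bound and substituting it into the display above gives the claimed $2\sqrt{2}L_\ell C_{\mathrm{base}}\sqrt{K/n}$ term.

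The main point to check is that the flattening step is legitimate even though the $nK$ per-sequence inputs in $\widetilde{S}$ are dependent: the $K$ sequences of one request share the context $x^{(t)}$. This is fine because Assumption~\ref{ass:capacity} is stated for \emph{every} finite collection of $m$ valid inputs. It is therefore a deterministic, empirical statement that needs no independence among the inputs. The Rademacher variables $\epsilon_{t,k}$ remain i.i.d., and the only independence Lemma~\ref{lem:rad-general} uses is across the $n$ instances. Lemma~\ref{lem:rad-general} requires the expected complexity $\mathfrak{R}_n$, and the empirical bound holds pointwise for every $S$, so taking expectation over $S$ preserves it.

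Finally, Theorem~\ref{thm:comparison} states its confidence term with $\log(4/\delta)$ rather than $\log(2/\delta)$. To obtain the two bounds of the theorem simultaneously, I apply this proposition and Proposition~\ref{prop:joint-bound} each at level $\delta/2$ and take a union bound. The proposition itself needs only $\log(2/\delta)$.
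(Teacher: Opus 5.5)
Your proposal is correct and follows the paper's proof essentially verbatim: the paper likewise applies Lemma~\ref{lem:rad-general} together with Corollary~\ref{cor:ind-scaling}, which itself rests on the contraction step and on the flattening identity with the capacity bound applied to the $nK$ dependent per-sequence inputs in $\widetilde{S}$. Your remark that this dependence is harmless, because the capacity assumption holds for every finite collection of inputs, and your $\delta/2$ union-bound step for Theorem~\ref{thm:comparison} both match the paper's treatment.
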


\begin{proof}
By Corollary~\ref{cor:ind-scaling} and
Lemma~\ref{lem:rad-general},
\[
\begin{aligned}
\mathcal{R}^{\mathrm{ind}}(\theta)
-
\widehat{\mathcal{R}}^{\mathrm{ind}}_S(\theta)
&\leq
2\mathfrak{R}_n(\mathcal{G}_{\mathrm{ind}})
+
3\sqrt{\frac{\log(2/\delta)}{2n}}\\
&\leq
2\sqrt{2}L_\ell C_{\mathrm{base}}
\sqrt{\frac{K}{n}}
+
3\sqrt{\frac{\log(2/\delta)}{2n}},
\end{aligned}
\]
uniformly over $\theta\in\Theta_{\mathrm{ind}}$.
\end{proof}

\begin{proof}[Proof of Theorem~\ref{thm:comparison}]
Apply Propositions~\ref{prop:joint-bound} and~\ref{prop:ind-bound}
with confidence parameter $\delta/2$. By the union bound, both
inequalities hold simultaneously with probability at least
$1-\delta$:
\[
\begin{aligned}
\mathcal{R}(\theta)
-
\widehat{\mathcal{R}}_S(\theta)
&\leq
\frac{2\sqrt{2}L_\ell\Lambda C_{\mathrm{hid}}}{\sqrt{n}}
+
3\sqrt{\frac{\log(4/\delta)}{2n}},\\
\mathcal{R}^{\mathrm{ind}}(\theta)
-
\widehat{\mathcal{R}}^{\mathrm{ind}}_S(\theta)
&\leq
2\sqrt{2}L_\ell C_{\mathrm{base}}
\sqrt{\frac{K}{n}}
+
3\sqrt{\frac{\log(4/\delta)}{2n}}.
\end{aligned}
\]

The ratio between the leading capacity terms is
\[
\frac{
2\sqrt{2}L_\ell C_{\mathrm{base}}
\sqrt{K/n}
}{
2\sqrt{2}L_\ell\Lambda C_{\mathrm{hid}}/\sqrt{n}
}
=
\frac{C_{\mathrm{base}}}
{\Lambda C_{\mathrm{hid}}}
\sqrt{K}.
\]
Hence, when
\[
\frac{C_{\mathrm{base}}}
{\Lambda C_{\mathrm{hid}}}
=
\Theta(1),
\]
the ratio scales as $\Theta(\sqrt{K})$.
\end{proof}
\subsection{Detailed Analysis of Robustness}
\label{appdix:robustness}

We next analyze \textbf{robustness} to generator-induced \textbf{SSB} between the training and test environments. Unlike the generalization analysis, the following analysis concerns the training regimes: pointwise squared loss and listwise softmax cross-entropy. 

Let $\mathcal{D}_{\mathrm{train}}$ and $\mathcal{D}_{\mathrm{test}}$ denote the distributions induced by the training and test generators, respectively. We assume that the
conditional feedback mechanisms below are defined on the support of
$\mathcal{D}_{\mathrm{test}}$.

\begin{assumption}[Decomposable additive feedback shift]
\label{ass:ssb-decomp}
For each environment $\tau\in\{\mathrm{train},\mathrm{test}\}$, the scalar feedback associated with candidate sequence $L_k$ satisfies
\[
r_k=u_k^\star(z)+\bar{\mu}_{\tau}(z)+\bar{\nu}_{\tau,k}(z)+\varepsilon_{\tau,k},\quad\mathbb{E}[\varepsilon_{\tau,k}\mid z]=0,
\]
where $\sum_{k=1}^{K}\bar{\nu}_{\tau,k}(z)=0$. Here, $\bar{\mu}_{\tau}(z)$ is a shared candidate-set-level shift,
whereas $\bar{\nu}_{\tau,k}(z)$ is a candidate-specific distortion.
\end{assumption}

Define $\Delta\bar{\mu}=\bar{\mu}_{\mathrm{train}}-\bar{\mu}_{\mathrm{test}}, \Delta\bar{\nu}_k=\bar{\nu}_{\mathrm{train},k}-\bar{\nu}_{\mathrm{test},k}$ and let $\Delta\bar{\boldsymbol{\nu}}=(\Delta\bar{\nu}_1,\ldots,\Delta\bar{\nu}_K)^\top$. To isolate sensitivity to the learning objective, we conduct a
Bayes-level analysis and assume that the corresponding score classes
are sufficiently rich to realize the population-optimal predictors.
This removes approximation error and therefore gives the predictor an oracle comparison.

\subsubsection{Pointwise Regression}

Consider the pointwise squared risk
\begin{equation}
\mathcal{R}_{\mathrm{pt}}^{\tau}(\bm{f})=
\mathbb{E}_{z\sim\mathcal{D}_{\tau}}
\left[
\frac{1}{K}
\sum_{k=1}^{K}
\mathbb{E}
\left[
\bigl(f_k(z)-r_k\bigr)^2
\mid z
\right]
\right].
\end{equation}
Under Assumption~\ref{ass:ssb-decomp}, its Bayes-optimal predictor is
\begin{equation}
f_{\tau,k}^{\star}(z)=u_k^\star(z)+\bar{\mu}_{\tau}(z)+\bar{\nu}_{\tau,k}(z).
\end{equation}
We define the pointwise excess test risk induced by the environment
shift as
\begin{equation}
\Delta_{\mathrm{SSB}}^{\mathrm{ind}}=\mathcal{R}_{\mathrm{pt}}^{\mathrm{test}}\bigl(\bm{f}_{\mathrm{train}}^\star\bigr)-\mathcal{R}_{\mathrm{pt}}^{\mathrm{test}}\bigl(\bm{f}_{\mathrm{test}}^\star\bigr).
\end{equation}

\begin{proposition}[Pointwise excess risk]
\label{prop:ind-ssb}
Under Assumption~\ref{ass:ssb-decomp},
\begin{equation}
\Delta_{\mathrm{SSB}}^{\mathrm{ind}}
=
\mathbb{E}_{z\sim\mathcal{D}_{\mathrm{test}}}
\left[
(\Delta\bar{\mu})^2
+
\frac{1}{K}
\sum_{k=1}^{K}
(\Delta\bar{\nu}_k)^2
\right].
\end{equation}
\end{proposition}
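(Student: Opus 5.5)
The plan is the standard bias--variance decomposition of the squared loss, carried out conditionally on $z$ and then averaged over $z\sim\mathcal{D}_{\mathrm{test}}$.

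\textbf{Step 1: pointwise decomposition.} Fix $z$ in the support of $\mathcal{D}_{\mathrm{test}}$ and a candidate index $k$. Write the test feedback as $r_k=f^\star_{\mathrm{test},k}(z)+\varepsilon_{\mathrm{test},k}$, using Assumption~\ref{ass:ssb-decomp} with $\mathbb{E}[\varepsilon_{\mathrm{test},k}\mid z]=0$. For any predictor $f$,
\[
\mathbb{E}\bigl[(f_k(z)-r_k)^2\mid z\bigr]=\bigl(f_k(z)-f^\star_{\mathrm{test},k}(z)\bigr)^2+\mathbb{E}\bigl[\varepsilon_{\mathrm{test},k}^2\mid z\bigr],
\]
because the cross term vanishes by the zero conditional mean of the noise. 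The noise variance does not depend on $f$. So, averaging over $k$ and over $z$, the excess risk $\Delta_{\mathrm{SSB}}^{\mathrm{ind}}$ equals $\mathbb{E}_{z}\bigl[\tfrac1K\sum_k(f^\star_{\mathrm{train},k}-f^\star_{\mathrm{test},k})^2\bigr]$. This step also confirms that $f^\star_{\tau,k}$ is the Bayes-optimal predictor in environment $\tau$. Well-definedness needs finite second moments of the noise; I will state this implicitly. It also needs $f^\star_{\mathrm{train}}$ to be defined on the test support, which the paper assumes.

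\textbf{Step 2: substitute the shift.} By definition, $f^\star_{\mathrm{train},k}-f^\star_{\mathrm{test},k}=\Delta\bar\mu+\Delta\bar\nu_k$. Expanding the square gives
\[
\frac1K\sum_k(\Delta\bar\mu+\Delta\bar\nu_k)^2=(\Delta\bar\mu)^2+\frac{2\Delta\bar\mu}{K}\sum_k\Delta\bar\nu_k+\frac1K\sum_k(\Delta\bar\nu_k)^2 .
\]

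\textbf{Step 3: kill the cross term.} The cross term vanishes because $\sum_k\bar\nu_{\tau,k}(z)=0$ holds in both environments, so $\sum_k\Delta\bar\nu_k=0$. This zero-sum constraint is the only non-routine ingredient, and it is exactly what makes the decomposition orthogonal. Taking the expectation over $z\sim\mathcal{D}_{\mathrm{test}}$ then yields the claimed identity. I expect no real obstacle; the only care needed is making sure the cross term is handled pointwise in $z$ before averaging.
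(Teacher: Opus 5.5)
Your proposal is correct and follows essentially the same route as the paper. The Bayes-risk (bias--variance) decomposition reduces $\Delta_{\mathrm{SSB}}^{\mathrm{ind}}$ to the averaged squared gap between the two Bayes predictors, and the zero-sum constraint $\sum_{k}\Delta\bar{\nu}_k=0$ removes the cross term after expanding $(\Delta\bar{\mu}+\Delta\bar{\nu}_k)^2$; your remarks on finite noise second moments and on handling the cross term pointwise in $z$ only make explicit what the paper leaves implicit.
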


\begin{proof}
For squared loss, the standard Bayes-risk decomposition
\[
\mathcal{R}_{\mathrm{pt}}^{\mathrm{test}}
\bigl(\bm{f}_{\mathrm{train}}^\star\bigr)
-
\mathcal{R}_{\mathrm{pt}}^{\mathrm{test}}
\bigl(\bm{f}_{\mathrm{test}}^\star\bigr) =
\mathbb{E}_{z\sim\mathcal{D}_{\mathrm{test}}}
\left[
\frac{1}{K}
\sum_{k=1}^{K}
\left(
f_{\mathrm{train},k}^\star(z)
-
f_{\mathrm{test},k}^\star(z)
\right)^2
\right].
\]
Substituting the two Bayes predictors yields
\[
\Delta_{\mathrm{SSB}}^{\mathrm{ind}}
=
\mathbb{E}_{z\sim\mathcal{D}_{\mathrm{test}}}
\left[
\frac{1}{K}
\sum_{k=1}^{K}
\bigl(
\Delta\bar{\mu}
+
\Delta\bar{\nu}_k
\bigr)^2
\right].
\]
Expanding the square and using
$\sum_{k=1}^{K}\Delta\bar{\nu}_k=0$ gives
\[
\frac{1}{K}
\sum_{k=1}^{K}
\bigl(
\Delta\bar{\mu}
+
\Delta\bar{\nu}_k
\bigr)^2
=
(\Delta\bar{\mu})^2
+
\frac{1}{K}
\sum_{k=1}^{K}
(\Delta\bar{\nu}_k)^2,
\]
which proves the claim.
\end{proof}

Proposition~\ref{prop:ind-ssb} shows that pointwise regression
absorbs both the shared shift and the candidate-specific distortion.
In particular, even a shift that adds the same value to every
candidate target contributes the nonzero term
$(\Delta\bar{\mu})^2$.

\subsubsection{Listwise Softmax}

We next consider listwise supervision. To obtain a smooth local characterization, we adopt the following stylized multinomial-choice model. Under environment $\tau$, let
the conditional label distribution be generated from the shifted
utilities:
\[
p_{\tau,k}(z)
=
\frac{
\exp\bigl(
u_k^\star(z)
+
\bar{\mu}_{\tau}(z)
+
\bar{\nu}_{\tau,k}(z)
\bigr)
}{
\sum_{j=1}^{K}
\exp\bigl(
u_j^\star(z)
+
\bar{\mu}_{\tau}(z)
+
\bar{\nu}_{\tau,j}(z)
\bigr)
}.
\]
Because the shared shift is identical for all candidates,
\[
p_{\tau,k}(z)
=
\frac{
\exp\bigl(
u_k^\star(z)+\bar{\nu}_{\tau,k}(z)
\bigr)
}{
\sum_{j=1}^{K}
\exp\bigl(
u_j^\star(z)+\bar{\nu}_{\tau,j}(z)
\bigr)
}.
\]
Thus, $\bar{\mu}_{\tau}$ cancels exactly before model estimation. For a score vector $\bm{s}(z)$, define the listwise risk
\begin{equation}
\mathcal{R}_{\mathrm{list}}^{\tau}(\bm{s})
=
\mathbb{E}_{z\sim\mathcal{D}_{\tau}}
\left[
-\sum_{k=1}^{K}
p_{\tau,k}(z)
\log
\frac{\exp(s_k(z))}
{\sum_{j=1}^{K}\exp(s_j(z))}
\right].
\end{equation}

\begin{lemma}[Bayes scores and shift invariance]
\label{lem:joint-bayes}
A score vector minimizes
$\mathcal{R}_{\mathrm{list}}^{\tau}$ if and only if
\[
s_{\tau,k}^\star(z)
=
u_k^\star(z)
+
\bar{\nu}_{\tau,k}(z)
+
c(z),
\quad
k\in[K],
\]
where $c(z)$ is an arbitrary scalar function. In particular, the
Bayes-optimal softmax distribution is independent of
$\bar{\mu}_{\tau}$.
\end{lemma}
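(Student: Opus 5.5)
The risk is an expectation over $z$ of a conditional cross-entropy, and the score vector $\bm{s}(z)$ can be chosen separately for each $z$. So I will minimize pointwise in $z$. Fix $z$, write $p_k=p_{\tau,k}(z)$, and let $q_k(\bm{s})=\exp(s_k)/\sum_j\exp(s_j)$. The inner objective is
\[
-\sum_k p_k\log q_k(\bm{s})=H(\bm p)+\mathrm{KL}\bigl(\bm p\,\|\,\bm q(\bm s)\bigr).
\]
By Gibbs' inequality it is minimized exactly when $\bm q(\bm s)=\bm p$. Since every $p_k>0$ under the stylized exponential model, this minimum is attained in the interior, so no boundary or infimum issue arises.

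The main step is to characterize the preimage of $\bm p$ under the softmax. Recall $q(\bm s)=q(\bm s')$ iff $s_k-s'_k$ is constant in $k$.
\begin{itemize}
\item One direction: a common additive constant cancels in the ratio.
\item The other direction: equality of the softmax outputs gives $s_k-\log\sum_j e^{s_j}=s'_k-\log\sum_j e^{s'_j}$ for every $k$, so the difference is a constant independent of $k$.
\end{itemize}
We already know $\bm p=q(\bm u^\star+\bar{\bm\nu}_\tau)$. This is the cancellation of $\bar\mu_\tau$ shown just before the lemma. Hence $\bm q(\bm s)=\bm p$ iff $s_k=u^\star_k+\bar\nu_{\tau,k}+c(z)$ for some scalar $c(z)$.

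To lift this to the population statement, I argue in both directions.
\begin{itemize}
\item \emph{Sufficiency.} Any such $\bm s$ attains the pointwise minimum for every $z$. It therefore attains the minimum of the expectation.
\item \emph{Necessity.} A minimizer of the risk must be a pointwise minimizer for $\mathcal D_\tau$-almost every $z$. Otherwise, replacing it on the set where it is suboptimal with the pointwise optimum would strictly lower the risk.
\end{itemize}
This gives the characterization up to $\mathcal D_\tau$-null sets, which is implicit in the statement. Finally, the Bayes-optimal softmax distribution equals $\bm p_\tau(z)$, which does not involve $\bar\mu_\tau$. That proves the last claim.

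I expect the main obstacle to be only the measure-theoretic care in the "only if" direction: ensuring $c(z)$ is measurable and that the statement is read almost surely. Everything else is routine, and the core identity is the standard softmax shift-invariance.
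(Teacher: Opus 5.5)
Your proposal is correct and follows the same route as the paper: the conditional cross-entropy is minimized exactly when the predicted softmax equals $\bm{p}_{\tau}(z)$, and softmax scores are identifiable only up to a common additive constant, which gives the stated family of minimizers and the cancellation of $\bar{\mu}_{\tau}$. Your pointwise-in-$z$ reduction and the almost-everywhere qualification in the necessity direction make explicit what the paper's two-line proof leaves implicit.
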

\begin{proof}
Cross-entropy is minimized when the predicted conditional
distribution equals $\bm{p}_{\tau}(z)$. Softmax scores are
identifiable only up to a common additive constant, which gives the
stated family of minimizers.
\end{proof}
Therefore, we define the listwise excess risk and give
\begin{equation}
\Delta_{\mathrm{SSB}}^{\mathrm{joint}}
=
\mathcal{R}_{\mathrm{list}}^{\mathrm{test}}
\bigl(\bm{s}_{\mathrm{train}}^\star\bigr)
-
\mathcal{R}_{\mathrm{list}}^{\mathrm{test}}
\bigl(\bm{s}_{\mathrm{test}}^\star\bigr).
\end{equation}
\begin{proposition}[Listwise excess risk]
\label{prop:joint-ssb}
Under Assumption~\ref{ass:ssb-decomp},
\begin{equation}
\Delta_{\mathrm{SSB}}^{\mathrm{joint}}
=
\mathbb{E}_{z\sim\mathcal{D}_{\mathrm{test}}}
\left[
\mathrm{KL}
\left(
\bm{p}_{\mathrm{test}}(z)
\,\middle\|\,
\bm{p}_{\mathrm{train}}(z)
\right)
\right].
\end{equation}
Moreover, let
\[
\bm{s}_{\mathrm{test}}
=
\bm{u}^{\star}
+
\bar{\boldsymbol{\nu}}_{\mathrm{test}},
\quad
\bm{s}_{\mathrm{train}}
=
\bm{s}_{\mathrm{test}}
+
\Delta\bar{\boldsymbol{\nu}}.
\]
Then the KL divergence admits the exact representation
\begin{equation}
\mathrm{KL}
\left(
\bm{p}_{\mathrm{test}}
\,\middle\|\,
\bm{p}_{\mathrm{train}}
\right)
=
\int_{0}^{1}
(1-t)\,
\Delta\bar{\boldsymbol{\nu}}^{\top}
\bm{H}_t
\Delta\bar{\boldsymbol{\nu}}
\,dt,
\end{equation}
where $\bm{H}_t=\operatorname{diag}(\bm{p}_t)-\bm{p}_t\bm{p}_t^\top$ and  $\bm{p}_t=\operatorname{softmax}\left(\bm{s}_{\mathrm{test}}+t\Delta\bar{\boldsymbol{\nu}}\right)$. Consequently, for a sufficiently small distortion gap,
\[
\Delta_{\mathrm{SSB}}^{\mathrm{joint}}
=
\frac{1}{2}
\mathbb{E}_{z\sim\mathcal{D}_{\mathrm{test}}}
\left[
\operatorname{Var}_{k\sim\bm{p}_{\mathrm{test}}}
\bigl(
\Delta\bar{\nu}_k
\bigr)
\right]
+
o\left(
\mathbb{E}_{z\sim\mathcal{D}_{\mathrm{test}}}
\left[
\|\Delta\bar{\boldsymbol{\nu}}\|_2^2
\right]
\right).
\]
\end{proposition}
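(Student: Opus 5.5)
Three steps: first the KL identity, then the integral representation, then the small-distortion expansion.

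\textbf{Step 1: the KL identity.} By Lemma~\ref{lem:joint-bayes}, any population minimizer of $\mathcal{R}^{\tau}_{\mathrm{list}}$ induces the softmax distribution $\bm{p}_\tau(z)$. The arbitrary constant $c(z)$ does not change the softmax, so $\bm{s}^\star_{\mathrm{train}}$ and $\bm{s}^\star_{\mathrm{test}}$ may be taken as $\bm{u}^\star+\bar{\boldsymbol\nu}_{\mathrm{train}}$ and $\bm{u}^\star+\bar{\boldsymbol\nu}_{\mathrm{test}}$. Then $\bm{s}_{\mathrm{train}}=\bm{s}_{\mathrm{test}}+\Delta\bar{\boldsymbol\nu}$, as in the statement. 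For fixed $z$, the conditional test risk of a score vector with softmax $\bm{q}$ is the cross-entropy $-\sum_k p_{\mathrm{test},k}\log q_k$. Subtracting the value at $\bm{q}=\bm{p}_{\mathrm{test}}$ gives $\sum_k p_{\mathrm{test},k}\log(p_{\mathrm{test},k}/p_{\mathrm{train},k})=\mathrm{KL}(\bm{p}_{\mathrm{test}}\|\bm{p}_{\mathrm{train}})$. Taking $\mathbb{E}_{z\sim\mathcal{D}_{\mathrm{test}}}$ yields the first display. Only test-environment label probabilities enter, so we use only that $\bm{p}_{\mathrm{train}}(z)$ is well defined on the test support, which the standing assumption provides.

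\textbf{Step 2: the integral representation.} Write the KL as a Bregman divergence of the log-partition function $A(\bm{s})=\log\sum_j e^{s_j}$. Since $\log p_k=s_k-A(\bm{s})$,
\[
\mathrm{KL}(\bm{p}_{\mathrm{test}}\|\bm{p}_{\mathrm{train}})=A(\bm{s}_{\mathrm{train}})-A(\bm{s}_{\mathrm{test}})-\langle\nabla A(\bm{s}_{\mathrm{test}}),\Delta\bar{\boldsymbol\nu}\rangle,
\]
using $\nabla A=\operatorname{softmax}$. Set $g(t)=A(\bm{s}_{\mathrm{test}}+t\Delta\bar{\boldsymbol\nu})$. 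Then the KL equals $g(1)-g(0)-g'(0)$. Taylor's theorem with integral remainder gives $\int_0^1(1-t)g''(t)\,dt$, and $g''(t)=\Delta\bar{\boldsymbol\nu}^\top\nabla^2A\,\Delta\bar{\boldsymbol\nu}$ with $\nabla^2A=\operatorname{diag}(\bm{p}_t)-\bm{p}_t\bm{p}_t^\top=\bm{H}_t$. This is exactly the claimed representation.

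\textbf{Step 3: the local expansion.} First observe that
\[
\Delta\bar{\boldsymbol\nu}^\top\bm{H}_0\Delta\bar{\boldsymbol\nu}=\sum_k p_{\mathrm{test},k}\Delta\bar\nu_k^2-\Bigl(\sum_k p_{\mathrm{test},k}\Delta\bar\nu_k\Bigr)^2=\operatorname{Var}_{k\sim\bm{p}_{\mathrm{test}}}(\Delta\bar\nu_k).
\]
Since $\int_0^1(1-t)\,dt=\tfrac12$, replacing $\bm{H}_t$ by $\bm{H}_0$ produces the leading term. The remainder is $\int_0^1(1-t)\Delta\bar{\boldsymbol\nu}^\top(\bm{H}_t-\bm{H}_0)\Delta\bar{\boldsymbol\nu}\,dt$. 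Softmax is $1$-Lipschitz in the sense $\|\bm{p}_t-\bm{p}_0\|_1\le 2t\|\Delta\bar{\boldsymbol\nu}\|_\infty$, so $\|\bm{H}_t-\bm{H}_0\|_{\mathrm{op}}=O(\|\Delta\bar{\boldsymbol\nu}\|_2)$. The pointwise remainder is therefore $O(\|\Delta\bar{\boldsymbol\nu}\|_2^3)$.

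The main obstacle is passing this pointwise bound through $\mathbb{E}_z$ to obtain $o(\mathbb{E}\|\Delta\bar{\boldsymbol\nu}\|_2^2)$. I would interpret ``sufficiently small distortion gap'' as $\sup_z\|\Delta\bar{\boldsymbol\nu}(z)\|_2\le\eta\to0$. Then the remainder is at most $C\eta\,\mathbb{E}\|\Delta\bar{\boldsymbol\nu}\|_2^2=o(\mathbb{E}\|\Delta\bar{\boldsymbol\nu}\|_2^2)$, with $C$ an absolute constant because $\bm{H}_t$ is uniformly bounded.
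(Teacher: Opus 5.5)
Your proposal is correct and follows the paper's own argument: the cross-entropy gap is the KL, the KL is the Bregman divergence of the log-partition function $A$, Taylor's theorem with integral remainder gives the $\bm{H}_t$ representation, and the quadratic form at $t=0$ is the softmax-weighted variance. You go further than the paper only on the remainder. The paper simply takes the expectation of a pointwise $o(\|\Delta\bar{\boldsymbol{\nu}}\|_2^2)$ term, whereas your Lipschitz estimate $\|\bm{H}_t-\bm{H}_0\|_{\mathrm{op}}=O(t\|\Delta\bar{\boldsymbol{\nu}}\|_2)$, together with reading ``small'' as $\sup_z\|\Delta\bar{\boldsymbol{\nu}}(z)\|_2\le\eta$, actually justifies that step; the absolute constant comes from this Lipschitz estimate, not from boundedness of $\bm{H}_t$ alone.
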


\begin{proof}
The difference between the cross-entropy attained by $\bm{p}_{\mathrm{train}}$ and that attained by $\bm{p}_{\mathrm{test}}$ is $\mathrm{KL}\left(\bm{p}_{\mathrm{test}}
\,\middle\|\,\bm{p}_{\mathrm{train}}\right)$, which proves the first statement. Let $A(\bm{s})=\log\sum_{k=1}^{K}\exp(s_k)$. Since
\[
\nabla A(\bm{s})
=
\operatorname{softmax}(\bm{s}),
\quad
\nabla^2 A(\bm{s})
=
\operatorname{diag}(\bm{p})
-
\bm{p}\bm{p}^{\top},
\]
the KL divergence can be written as
\[
\mathrm{KL}
\left(
\bm{p}_{\mathrm{test}}
\,\middle\|\,
\bm{p}_{\mathrm{train}}
\right)
=
A(\bm{s}_{\mathrm{train}})
-
A(\bm{s}_{\mathrm{test}}) -
\nabla A(\bm{s}_{\mathrm{test}})^\top
\left(
\bm{s}_{\mathrm{train}}
-
\bm{s}_{\mathrm{test}}
\right).
\]
Applying Taylor's theorem with integral remainder and using
$\bm{s}_{\mathrm{train}}-\bm{s}_{\mathrm{test}}
=\Delta\bar{\boldsymbol{\nu}}$ gives the exact integral form.

For any vector $\bm{v}$,
\[
\bm{v}^{\top}
\left(
\operatorname{diag}(\bm{p})
-
\bm{p}\bm{p}^{\top}
\right)
\bm{v}
=
\operatorname{Var}_{k\sim\bm{p}}(v_k).
\]
Taking the second-order expansion around
$\bm{s}_{\mathrm{test}}$ therefore yields
\[
\mathrm{KL}
\left(
\bm{p}_{\mathrm{test}}
\,\middle\|\,
\bm{p}_{\mathrm{train}}
\right)
=
\frac{1}{2}
\operatorname{Var}_{k\sim\bm{p}_{\mathrm{test}}}
\bigl(
\Delta\bar{\nu}_k
\bigr)
+
o\left(
\|\Delta\bar{\boldsymbol{\nu}}\|_2^2
\right).
\]
Taking expectation over the test distribution completes the proof.
\end{proof}

The expression confirms that the listwise excess risk contains no
term involving $\Delta\bar{\mu}$. Its local sensitivity is determined
only by the relative distortion across candidate sequences.

\subsection{Detailed Proof of Computational Complexity}
\label{appdix:complexity}

Let $C_{\mathrm{item}}$ denote the cost of constructing the
representation of one candidate element,
$C_{\mathrm{list}}$ the cost of aggregating one assembled candidate
sequence,
$C_{\mathrm{idx}}$ the cost of retrieving one element representation
through indexing,
$C_{\mathrm{att}}(K)$ the cost of cross-sequence attention over $K$
sequence representations, and
$C_{\mathrm{score}}$ the cost of scoring one sequence representation.

For clarity, the following analysis focuses on candidate-side
representation reuse. We omit the request-context encoding term,
because the candidate-factorized evaluator may repeat this computation
for different candidate sequences, whereas FlashEvaluator computes it
once per candidate instance. Including this term would only further
favor FlashEvaluator.

\subsubsection{Complexity of the Independent Evaluator}

The independent evaluator applies the same scoring model to
each candidate sequence separately. For one candidate sequence
$L_k$ of length $l$, the computational cost is
\[
T_{\mathrm{ind}}^{(k)}
=
\Theta\bigl(
lC_{\mathrm{item}}
+
C_{\mathrm{list}}
+
C_{\mathrm{score}}
\bigr).
\]
Summing over all $K$ candidate sequences gives
\[
\begin{aligned}
T_{\mathrm{ind}}
=
\sum_{k=1}^{K}
T_{\mathrm{ind}}^{(k)}
=
\Theta\bigl(
KlC_{\mathrm{item}}
+
KC_{\mathrm{list}}
+
KC_{\mathrm{score}}
\bigr).
\end{aligned}
\]

\subsubsection{Complexity of the Joint Evaluator}

FlashEvaluator separates candidate-element encoding from
sequence-specific processing. It first constructs one representation
for each of the $U$ distinct candidate elements:
\[
T_{\mathrm{item}}
=
\Theta\bigl(
UC_{\mathrm{item}}
\bigr).
\]
The $K$ candidate sequences are then assembled by retrieving the
shared representations of all $Kl$ element occurrences:
\[
T_{\mathrm{idx}}
=
\Theta\bigl(
KlC_{\mathrm{idx}}
\bigr).
\]
After indexing, each sequence is processed by the shared
intra-sequence aggregation module:
\[
T_{\mathrm{list}}
=
\Theta\bigl(
KC_{\mathrm{list}}
\bigr).
\]
Finally, FlashEvaluator models cross-sequence interactions over the
$K$ compressed sequence-level representations and predicts the
$K$ scores:
\[
T_{\mathrm{interact}}
=
\Theta\bigl(
C_{\mathrm{att}}(K)
+
KC_{\mathrm{score}}
\bigr).
\]
Although $C_{\mathrm{att}}(K)$ may grow quadratically with $K$ under
dense self-attention, it is applied only to the low-dimensional
sequence representations obtained after candidate-element encoding
and intra-sequence aggregation. Consequently, over the evaluated
range of $K$, its practical cost is small relative to the dominant
candidate-element representation construction.

Combining these components yields
\[
\label{eq:Tnew}
T_{\mathrm{joint}}
=
\Theta\Bigl(
UC_{\mathrm{item}}
+
KlC_{\mathrm{idx}}
+
KC_{\mathrm{list}}
+
C_{\mathrm{att}}(K)
+
KC_{\mathrm{score}}
\Bigr).
\]

\subsubsection{Comparison and Scalability}

We now prove Proposition~\ref{prop:complexity-advantage} under the
encoding-dominant regime, where candidate-element representation
construction is substantially more expensive than indexing,
sequence-level aggregation, score prediction, and cross-sequence
interaction.

\begin{proof}[Proof of Proposition~\ref{prop:complexity-advantage}]
Using $KlC_{\mathrm{item}}$ as the dominant term of
$T_{\mathrm{ind}}$, we obtain
\[
\begin{aligned}
\frac{T_{\mathrm{joint}}}{T_{\mathrm{ind}}}
&\approx
\frac{
UC_{\mathrm{item}}
+
KlC_{\mathrm{idx}}
+
KC_{\mathrm{list}}
+
C_{\mathrm{att}}(K)
+
KC_{\mathrm{score}}
}{
KlC_{\mathrm{item}}
+
KC_{\mathrm{list}}
+
KC_{\mathrm{score}}
}\\
&\approx
\frac{1}{\rho}
+
\mathcal{O}(\epsilon).
\end{aligned}
\]
Here, $\epsilon$ summarizes the costs of indexing, intra-sequence
aggregation, cross-sequence interaction, and score prediction,
normalized by the dominant candidate-element encoding cost. In
particular, $C_{\mathrm{att}}(K)$ is included in $\epsilon$. Although
dense cross-sequence attention has quadratic complexity in $K$, it
operates on compressed low-dimensional sequence representations and
accounts for only a small fraction of the total computation in our
serving regime. Therefore, $\epsilon$ remains small and the ratio is
dominated by $1/\rho$. The computational gain consequently increases
with the amount of element reuse across the candidate sequences.
\end{proof}

The result does not imply that the total cost of FlashEvaluator is
independent of $K$. Indexing, intra-sequence aggregation, and score
prediction remain at least linear in $K$, while
$C_{\mathrm{att}}(K)$ may grow quadratically when dense
self-attention is used. Nevertheless, cross-sequence attention is
performed only over $K$ compressed sequence-level representations
and remains substantially less expensive than repeated
candidate-element representation construction over the evaluated
range of $K$. Consequently, the approximation by $1/\rho$ applies
in the encoding-dominant serving regime targeted by FlashEvaluator.
\section{Details of Experiments}

The code will be released upon acceptance.

\subsection{Datasets}
\label{appdix:datasets}

\textbf{ML-1M}~\citep{harper2016movielens} and \textbf{Amazon-Books}~\citep{he2016ups} provide timestamped user--item interactions but do not contain request-level candidate pools from an upstream ranking stage. We therefore use \textbf{BPR-MF}~\citep{rendle2009bpr} to emulate first-stage retrieval. For each user, interactions are ordered chronologically, and the last six interactions are held out for final evaluation and excluded from both retrieval and reranking training. The remaining interactions are split at an $8{:}2$ ratio solely for training and validating BPR-MF, with the validation split used for model selection and early stopping. After BPR-MF is fixed, it scores all items in the full item corpus for each user.

The $8{:}2$ split is used only for fitting the retrieval model. All interactions preceding the six held-out test items are subsequently segmented into ordered subsequences of length six to construct reranking training instances. Because the public datasets do not record actual request-level exposure slates, each subsequence is treated as a pseudo-exposure sequence. Its six observed items are retained, while additional candidates are sampled from the user's top-200 BPR-MF ranking after removing duplicates. The resulting candidate pool is passed to the generator to construct $K$ candidate sequences of length six.

For final evaluation, the six held-out interactions form the target sequence. We combine these target items with the user's BPR-MF ranking, preserve all target items, remove duplicates, and fill the remaining positions in descending BPR-MF score order until exactly 50 distinct candidate items are obtained. The fitted BPR-MF model, candidate pools, and generated candidate sequences are fixed and shared across all compared methods.

\textbf{RecFlow}~\citep{liu2024recflow} is a large-scale industrial dataset that records user behaviors and candidate information throughout a multi-stage recommendation pipeline. Following its official protocol, we use the second data period from February 5 to February 18, containing 3,308,233 requests, and intercept the pipeline at the reranking stage. For each request, the candidate pool consists of the top 120 items passed from the preceding ranking module, with their original ranking positions retained as input features.

The input includes video ID, category, and author-related attributes, together with the user's 50 most recent interactions. The reranking objective is to select an ordered sequence of six items from the 120 upstream candidates. Effective-view feedback is used as item-level supervision, and the same request splits, candidate pools, and input features are shared by all compared methods.

\textbf{CNN/DailyMail}~\citep{hermann2015teaching} is a standard benchmark for abstractive text summarization. We use \textbf{T5-base}~\citep{raffel2020exploring}, \textbf{BART-Large}~\citep{lewis2020bart}, and \textbf{Llama-3.1-8B-Instruct}~\citep{grattafiori2024llama} as generator backbones. For the controlled comparison, each source article is associated with $K=16$ candidate summaries, generated using beam search for T5 and BART and diverse sampling for Llama-3.1-8B-Instruct. All rerankers receive the same candidate summaries, and reference summaries are used to construct training targets and calculate offline metrics. FlashEvaluator at $K=64$ still incurs lower reranking latency (~40ms) than SimCLS at $K=16$ (~90ms), we additionally report an expanded-candidate setting in which FlashEvaluator selects from 64 summaries, examining whether its efficiency advantage can be converted into improved candidate coverage and selection quality.

For each recommendation candidate sequence
$L_k=(i_{k,1},\ldots,i_{k,l})$, let
$a_{k,j}\in\{0,1\}$ denote the observed feedback label of item
$i_{k,j}$. We define the proxy utility of the complete sequence as
\[
r_k=\sum_{j=1}^{l} a_{k,j},
\]
i.e., the number of positively labeled items contained in $L_k$.
The optimal-candidate label is then constructed as
\[
y=\min\arg\max_{k\in[K]} r_k,
\]
where the smallest candidate index is selected to resolve ties.
The pointwise variants are trained with item-level binary
cross-entropy using $a_{k,j}$ as the supervision target, whereas
the listwise variant uses the resulting sequence-level label $y$
in the softmax cross-entropy objective.

\paragraph{Online deployment.} FlashEvaluator is deployed in the reranking stage of the single-column recommendation feed on \textbf{Kuaishou}'s main app. The platform serves over 400 million daily active users, with an average daily time spent exceeding two hours per user, providing a large-scale and latency-sensitive production environment. The production baseline and FlashEvaluator are assigned to two disjoint traffic buckets, each covering 10\% of the total traffic, and the A/B test lasts for seven consecutive days. Both systems use the same candidate generators, input features, and multi-objective serving targets.

\subsection{Baselines}
\label{appdix:baselines}

\textbf{Recommendation.} We compare FlashEvaluator with the following recommendation baselines:
\begin{itemize}
    \item \textbf{DNN} and \textbf{DCN} independently estimate item-level relevance without explicitly modeling contextual dependencies within the final sequence.
    \item \textbf{Seq2Slate}~\citep{bello2018seq2slate} is an autoregressive sequence-to-sequence reranking method that generates an ordered list using a pointer-network architecture.
    \item \textbf{DLCM}~\citep{ai2018learning} applies a recurrent listwise context model to refine an initial ranking based on local contextual dependencies.
    \item \textbf{PRM}~\citep{pei2019personalized} uses Transformer-based self-attention to model interactions among items in a personalized reranking list.
    \item \textbf{SetRank}~\citep{pang2020setrank} jointly models candidate items using a permutation-invariant set encoder.
    \item \textbf{YOLOR}~\citep{wang2025yolor} is an evaluator-only reranking method that uses a tree-structured context cache to reuse computation across candidate permutations.
    \item \textbf{NAR4Rec}~\citep{ren2024nar4rec} generates a complete reranking sequence non-autoregressively and employs an evaluator to select among generated candidates.
    \item \textbf{PIER}~\citep{shi2023pier} adopts a Generator--Evaluator pipeline that aggregates candidates from multiple generators and uses OCPM for permutation-level evaluation. We use the more recent \textbf{MultG}~\citep{yang2025comprehensive} as its candidate generator.
\end{itemize}

\textbf{Text-summarization.} We compare FlashEvaluator with \textbf{SimCLS}~\citep{liu2021simcls}, which learns a reference-free summary evaluator using contrastive supervision, and \textbf{RankGPT}~\citep{sun2023chatgpt}, which prompts a large language model to rank multiple candidate summaries. We use \textbf{Qwen2.5-32B}~\citep{yang2025qwen3} as the RankGPT backbone. 

\subsection{Implementation Details}
\label{appdix:implementation}

For recommendation, the target sequence length is fixed to six, the embedding dimension is set to 64, and Adam is used with a learning rate of $10^{-3}$. Unless otherwise specified, the batch size is 2048, and the number of generated candidate
sequences is set to $K=10$. Within each comparison, all methods use identical candidate pools, candidate generators, and generated candidate sequences.

For \textbf{ML-1M} and \textbf{Amazon-Books}, the $8{:}2$ split is used exclusively for \textbf{BPR-MF} training and validation. Once the retrieval model is selected, all interactions preceding the final six test interactions are used to construct reranking training instances. \textbf{BPR-MF} scores the full item corpus, and the same retrieval results are used by all reranking methods.

For efficiency evaluation, all latency and throughput measurements are conducted using the same hardware, numerical precision, and candidate sequences. The online comparison is against the deployed serial production baseline. Therefore, the reported results represent end-to-end deployment gains rather than a comparison with a batched independent evaluator.

For the online deployment, both FlashEvaluator and the production PIER-style baseline use Linear Attention~\cite{katharopoulos2020linearattention} to satisfy serving-latency constraints. The query context contains the user's 1,000 most recent interactions in chronological order. Candidate sequences are generated from a reranking pool of 60 videos, and $K=50$ sequences are passed to the evaluator. Online inference is served on cloud containers equipped with 120 CPU threads and 500 GB of memory.

For \textbf{CNN/DailyMail}, we use publicly available generator checkpoints from Hugging Face\footnote{\url{https://huggingface.co}}. Generator outputs are precomputed and fixed across reranking methods. In the latency-matched experiment, the candidate size of FlashEvaluator is increased until its reranking latency is comparable to that of SimCLS, allowing us to evaluate whether computational savings can be converted into improved candidate coverage and selection quality. The code will be released upon acceptance.

\subsection{Ablation Study}
\label{appdix:ablation}

The evaluator architecture and the learning objective are conceptually orthogonal: both candidate-factorized and joint evaluators can, in principle, be trained using either pointwise or listwise supervision. We therefore conduct an ablation study on RecFlow with the MultG generator to separately examine the contribution of cross-sequence interaction and the listwise objective.

\begin{itemize}
    \item \textbf{PIER + Flash $-$ CrossSeq} removes the sequence-level cross-interaction layer while retaining shared request/candidate encoding, index-based sequence assembly, and intra-sequence aggregation. Each sequence is scored without explicit interaction among sequence-level representations, while the shared candidate-set encoding before sequence assembly remains unchanged.
    \item \textbf{PIER + Flash + Pointwise} retains the complete FlashEvaluator architecture but replaces the listwise softmax objective with pointwise binary cross-entropy.
    \item \textbf{PIER + Flash + Listwise} denotes the full model with cross-sequence interaction and listwise softmax training.
\end{itemize}

Table~\ref{tab:ablation_modules} shows that removing cross-sequence interaction or replacing listwise supervision with pointwise BCE degrades performance. The first comparison supports the contribution of candidate-relative representation learning, while the second indicates that relative supervision is beneficial in this setting. These results suggest that the architectural and objective-level components provide complementary gains. For complementary diagnosis on RecFlow, we report the same four recommendation metrics as in the main experiment for consistency.

\begin{table}[t]
\centering
\caption{Ablation study of FlashEvaluator on RecFlow.}
\label{tab:ablation_modules}
\setlength{\tabcolsep}{3.5pt}
\begin{tabular}{lcccc}
\toprule
Model & N@6 & P@6 & R@6 & F1@6 \\
\midrule
PIER + Flash $-$ CrossSeq & 0.1919 & 0.0936 & 0.2438 & 0.1280 \\
PIER + Flash + Pointwise & 0.1912 & 0.0933 & 0.2427 & 0.1275 \\
PIER + Flash + Listwise & \textbf{0.1925} & \textbf{0.0938} & \textbf{0.2446} & \textbf{0.1284} \\
\bottomrule
\end{tabular}
\end{table}

\subsection{Effect of Candidate-Sequence Expansion}
\label{appdix:candidate_expansion}

To examine whether the efficiency advantage of FlashEvaluator can be converted into better candidate coverage, we vary the number of generated candidate sequences $K$ on RecFlow while keeping the generator, candidate pool, and other model configurations unchanged. As shown in Table~\ref{tab:candidate_scaling}, increasing $K$ consistently improves all evaluation metrics, indicating that FlashEvaluator can exploit a larger candidate search space to identify higher-quality sequences.

\begin{table}[t]
\centering
\caption{Effect of increasing the number of candidate sequences $K$ on RecFlow.}
\label{tab:candidate_scaling}
\setlength{\tabcolsep}{4.2pt}
\begin{tabular}{lccccc}
\toprule
Method & $K$ & N@6 & P@6 & R@6 & F1@6 \\
\midrule
PIER & 10 & 0.1910 & 0.0935 & 0.2431 & 0.1277 \\
PIER+Flash & 10 & 0.1925 & 0.0938 & 0.2446 & 0.1284 \\
PIER+Flash & 20 & 0.1952 & 0.0952 & 0.2483 & 0.1300 \\
PIER+Flash & 30 & \textbf{0.1978} & \textbf{0.0963} & \textbf{0.2506} & \textbf{0.1313} \\
\bottomrule
\end{tabular}
\end{table}

Compared with $K=10$, increasing the number of candidate sequences to $K=30$ improves NDCG@6 from $0.1925$ to $0.1978$. Precision, recall, and F1 also exhibit consistent improvements, suggesting that the enlarged search space increases both the ranking quality and the number of relevant items contained in the selected sequence. The gains gradually diminish as $K$ increases, which is expected because additional generated sequences increasingly overlap with previously explored candidates.
\section{Extended Related Work}
\label{appdix:extended_related_work}

\subsection{Relative Supervision and Set-Conditioned Prediction}

Inter-candidate dependence can be introduced into a ranking system at three distinct levels: the training objective, the decision rule, and the model architecture. These mechanisms should be distinguished, because coupling candidate scores in a loss does not necessarily imply that candidate representations interact during the forward pass.

\textbf{Relative training objectives.}
Pairwise and listwise learning-to-rank objectives provide supervision based on relative preferences among ranked objects. In RecSys and IR, BPR~\cite{rendle2009bpr} contrasts positive and negative items, whereas ListNet~\cite{cao2007listnet} defines a listwise objective over the scores of all items or documents in a ranking instance. In LLM alignment, DPO~\cite{rafailov2023dpo} and related methods such as SimPO~\cite{meng2024simpo} and ORPO~\cite{hong2024orpo} optimize
preferred--dispreferred response pairs. GRPO~\cite{shao2024grpo} instead samples a group of responses and constructs relative advantages by normalizing their rewards within the group. These objectives couple learning signals across candidates, but they primarily optimize scalar ranking scores or generation probabilities; they do not by themselves require representation-level interaction among candidates at inference time.

\textbf{Set-dependent decision rules.}
Some methods incorporate candidate relationships after candidate generation or scoring. Minimum Bayes risk decoding and consensus-based selection evaluate each hypothesis according to its expected utility or agreement with other hypotheses
~\cite{kumar2004mbr,rosti2007consensus,hildebrand2008combination}. Self-consistency similarly aggregates the answers obtained from multiple sampled reasoning paths~\cite{wang2023selfconsistency}. These methods are explicitly set-dependent, but the dependency is typically introduced by an external utility function, pairwise agreement computation, or voting rule rather than by learned interaction among candidate hidden representations.

\textbf{Set-conditioned model architectures.}
Architecture-level interaction has also been studied in RecSys and IR, primarily for ranking atomic items or documents. PRM
~\cite{pei2019personalized} models mutual influence among items in an input slate, while SetRank~\cite{pang2020setrank} jointly scores a set of documents with permutation-invariant modeling. IFA~\cite{yu2024ifa} processes all target items together, uses efficient cross-attention between the candidate set and the user behavior sequence, and additionally models relationships among target items. HoMer~\cite{chen2025homer} constructs
one set-wise sample for all items associated with a request, shares request-level computation, and performs cross-item interaction before predicting their CTRs. In LLM-based IR, setwise prompting presents multiple documents to the model within one comparison and reduces the number of model invocations required for ranking~\cite{zhuang2024setwise}.

These approaches establish that relative supervision and set-conditioned prediction are not unique to FlashEvaluator. Their evaluation granularity, however, is typically an atomic item or document. In FlashEvaluator, each candidate is itself a complete ordered sequence $L_k=(i_{k,1},\ldots,i_{k,l})$. The model therefore separates intra-sequence aggregation from interaction among the resulting sequence representations. Moreover, its architectural contribution is independent of the choice between pointwise, pairwise, and listwise training objectives: either a candidate-factorized or a joint evaluator
may, in principle, be trained with a relative loss. Our focus is the combination of complete-sequence set conditioning and request-local computation reuse rather than the first use of relative supervision.

\subsection{Efficient Attention and Multi-Candidate Execution}

Efficiency improvements for multi-candidate evaluation operate at different levels, including attention primitives, ranking-model architectures, and execution-level computation reuse. Separating these levels is important because a faster attention operator does not necessarily eliminate repeated computation across candidate inputs.

\textbf{Efficient attention primitives.}
Linear Attention~\cite{katharopoulos2020linearattention} reformulates attention to reduce its dependence on sequence length under a kernelized attention formulation. FlashAttention~\cite{dao2022flashattention} retains exact softmax attention while
reducing memory traffic through an IO-aware tiled implementation. Multi-head Latent Attention (MLA)~\cite{deepseek2024v2} compresses key/value states into a latent representation to reduce the memory footprint of autoregressive inference. These methods optimize the attention operator or its temporal KV cache and are complementary to reuse across multiple candidate sequences within the same request.

\textbf{Set-wise and hardware-aware ranking architectures.}
At the ranking-model level, IFA~\cite{yu2024ifa} jointly processes all target items and uses a Linear Transformer to efficiently model their interaction with a long user behavior sequence. HoMer~\cite{chen2025homer} performs set-wise CTR prediction for all items in a request within a single model invocation, thereby sharing item-independent features and computation. Setwise LLM ranking~\cite{zhuang2024setwise} reduces model calls and prompt-token consumption by comparing multiple documents in each invocation. RankMixer~\cite{zhu2025rankmixer} replaces quadratic self-attention with hardware-aware token mixing to improve model FLOPs utilization for feature interaction. These methods improve atomic-candidate
ranking or the efficiency of the underlying interaction operator, but they do not directly address repeated elements across multiple complete candidate sequences.

\section{Limitations and Future Directions}

The benefits of FlashEvaluator depend on the amount of element reuse, the number and diversity of candidate sequences, and the relative cost of shared encoding and sequence-level interaction. Dense cross-sequence attention may become non-negligible for very large $K$, while the public recommendation benchmarks rely on constructed candidate pools and pseudo-exposure sequences rather than complete industrial pipeline logs. Moreover, the theoretical results are conditional capacity and local-shift comparisons and do not automatically imply lower total risk in every setting. Future work may explore adaptive candidate-budget allocation, sparse or linear cross-sequence interaction, and applications to broader multi-candidate selection tasks such as LLM post-training, search, and question answering.

\end{document}